\renewcommand\footnotetextcopyrightpermission[1]{} 
\renewcommand{\hat}{\widehat} 
\newcommand{\ED}{\textnormal{ED}}
\newcommand{\calH}{\mathcal{H}}
\newcommand{\calT}{\mathcal{T}}
\newcommand{\defn}{\emph}
\newcommand{\poly}{\textnormal{poly}}
\let\oldtexttt\texttt
\renewcommand{\texttt}[1]{\xspace{\normalfont{\oldtexttt{#1}}}\xspace}
\newcommand{\ins}{\text{\texttt{insert}}}
\newcommand{\del}{\text{\texttt{delete}}}
\newcommand{\rep}{\text{\texttt{replace}}}
\newcommand{\loo}{\text{\texttt{loop}}}
\newcommand{\match}{\text{\texttt{match}}}
\newcommand{\die}{\text{\texttt{stop}}}
\title{Approximate Similarity Search Under Edit Distance Using Locality-Sensitive Hashing}
\author{Samuel McCauley}
\email{sam@cs.williams.edu}
\affiliation{
  \institution{Williams College}
  \city{Williamstown, MA 01267 USA}
}
\date{}
\begin{document}
\sloppy

\begin{abstract}
  Edit distance similarity search, also called approximate pattern matching, is a fundamental problem with widespread database applications.  The goal of the problem is to preprocess $n$ strings of length $d$, to quickly answer queries $q$ of the form: if there is a database string within edit distance $r$ of $q$, return a database string within edit distance $cr$ of $q$. 
  Previous approaches to this problem either rely on very large (superconstant) approximation ratios $c$, or very small search radii $r$.  Outside of a narrow parameter range, these solutions are not competitive with trivially searching through all $n$ strings.

  In this work give a simple and easy-to-implement hash function that can quickly answer queries for a wide range of parameters.   
  Specifically, our strategy can answer queries in time $\tilde{O}(d3^rn^{1/c})$.  
  The best known practical results require $c \gg r$ to achieve any correctness guarantee; meanwhile, the best known theoretical results are very involved and difficult to implement, and require query time at least $24^r$.  Our results significantly broaden the range of parameters for which we can achieve nontrivial bounds, while retaining the practicality of a locality-sensitive hash function.

  We also show how to apply our ideas to the closely-related Approximate Nearest Neighbor problem for edit distance, obtaining similar time bounds. 


\end{abstract}

\maketitle


\section{Introduction}
\seclabel{intro}

For a large database of items, a \defn{similarity search} query asks which database item is most similar to the query.  This leads to a basic algorithmic question: how can we preprocess the database to answer these queries as quickly as possible?

Similarity search is used frequently in a wide variety of applications.  Unfortunately, for databases containing high-dimensional items, algorithm designers have had trouble obtaining bounds that are significantly faster than a linear scan of the entire database.  This has often been referred to as the ``curse of dimensionality.''  Recent work in fine-grained complexity has begun to explain this difficulty: achieving significantly better than linear search time would violate the strong exponential time hypothesis~\cite{AlmanWilliams15,Rubinstein18,CohenAddadFeSt19}.

However, these queries can be relaxed to \defn{approximate similarity search} queries.  For an approximation factor $c$, we want to find a database item that is at most a $c$ factor less similar than the most similar item.  

Approximate similarity search is fairly well-understood for many metrics; see~\cite{AndoniInRa18} for a survey.  For example, in Euclidean space we have theoretical upper bounds~\cite{AndoniLaRa17,Christiani17}, fast implementations~\cite{JohnsonDoJe17,MalkovYashunin18,AumullerBeFa19,DongMoLi11,LiZhYi16,WangShWa18}, and lower bounds for a broad class of algorithms~\cite{AndoniLaRa17}.  
Many of these results are based on \defn{locality-sensitive hashing} (LSH), originally described in~\cite{IndykMotwani98}.  A hash is locality-sensitive if similar items are more likely to share the same hash value.  

When a database contains text items, a natural notion of similarity is edit distance: how many character inserts, deletes, and replacements are required to get from the query string to a database string?  In fact, edit distance similarity search is frequently used in computational biology~\cite{KahveciLjSi04,OzturkFerhatosmanoglu03,LamSuTa08}, spellcheckers~\cite{WilburKiXi06,BrillMoore00}, computer security (in the context of finding similarity to weak passwords)~\cite{ManberWu94}, and many more applications; see e.g.~\cite{Boytsov11}.

Surprisingly, finding an efficient algorithm for approximate similarity search under edit distance remains essentially open.  Known results focus on methods for exact similarity search (with $c=1$), which incur expensive query times, and on embeddings, which require very large---in fact superconstant---approximation factors $c$.  

However, recent work provides a potential exception to this.  The CGK embedding~\cite{ChakrabortyGoKo16} is simple and practical, and embeds into Hamming space with stretch $O(r)$---in particular, it does well when the distance between the closest strings is fairly small.  EmbedJoin, a recent implementation by Zhang and Zhang~\cite{ZhangZhang17}, showed that the CGK embedding performs very well in practice.  EmbedJoin first embeds each string into Hamming space using the CGK embedding.  Then, the remaining nearest neighbor search\footnote{Zhang and Zhang investigated similarity \emph{joins}, in which all similar pairs in a set are returned, rather than preprocessing for individual nearest neighbor queries.  However, their ideas can be immediately generalized.} is done using the classic bit sampling LSH for Hamming distance.  Each of these steps---both the CGK embedding and the bit sampling LSH---is repeated several times independently.  This method gave orders of magnitude better performance than previous methods.  Furthermore, their results greatly outperformed the worst-case CGK analysis.

Thus, several questions about using CGK for edit distance similarity search remained.  Zhang and Zhang used several CGK embeddings, performing a sequence of Hamming distance hashes for each---can these two steps be combined into a single method to improve performance?  Meanwhile, their tests focused on practical datasets; is it possible to provide worst-case bounds for this method, ensuring good performance for any dataset?

In this paper we answer these questions in the affirmative.
In doing so, we give the first locality-sensitive hash for edit distance with worst-case guarantees.

\subsection{Results}

The main result of our paper is the first locality-sensitive hash for edit distance.  We analyze the performance of this hash when applied to the problems of approximate similarity search and approximate nearest neighbor search, obtaining time bounds that improve on the previously best-known bounds for a wide range of important parameter settings.

Let $n$ be the number of strings stored in the database.  
We assume that all query strings and all database strings have length at most $d$.  We assume $d = O(n)$ and the alphabet size is $O(n)$.\footnote{Usually $d$ and the alphabet size are much smaller.  If this assumption does not hold, it is likely that a completely different approach will be more successful: for example, if $d=\poly(n)$, then the method used to calculate the edit distance between two strings becomes critically important to the query time.}

Our first result analyzes the time and space required by our LSH to solve the approximate similarity search problem.  This data structure works for a fixed radius $r$: for each query, if there exists a database point within distance $r$, we aim to return a database point within distance $cr$.

\begin{theorem}
  \thmlabel{similarity}
  There exists a data structure answering Approximate Similarity Search queries under Edit Distance in $\tilde{O}(d3^rn^{1/c})$ time per query and $\tilde{O}(3^rn^{1 + 1/c} + dn)$ space.
\end{theorem}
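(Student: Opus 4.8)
The plan is to build a locality-sensitive hash family for edit distance and then plug it into the standard LSH data structure of Indyk--Motwani. The core object is a single hash function $h$ that simulates a walk through the input string $x$ while making random local decisions about how to ``align'' each position. Concretely, I would process $x$ left to right, maintaining a pointer into $x$ and a counter for the output position; at each step I flip coins to decide whether to emit the current character (advancing the pointer), to skip it (modeling a deletion), or to stall (modeling an insertion), in the spirit of the CGK embedding. The output of $h$ is the resulting transformed string, or—to keep hash values of bounded size and make the family usable as an LSH—a fixed-length prefix or a sampled coordinate of it. The key structural claim to establish is a \emph{collision bound}: if $\ED(x,y)\le r$ then $\Pr[h(x)=h(y)]\ge p_1$, and if $\ED(x,y)> cr$ then $\Pr[h(x)=h(y)]\le p_2$, with $\rho = \ln(1/p_1)/\ln(1/p_2)$ controlling the exponent $n^\rho$ in the query time.

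The heart of the argument is proving the lower bound $p_1$ on the collision probability when $\ED(x,y)\le r$. I would fix an optimal alignment between $x$ and $y$ realizing at most $r$ edit operations, and track the two simulated walks on $x$ and $y$ under a \emph{coupled} choice of randomness. The walks stay ``synchronized'' as long as they are reading matched characters; each edit operation in the alignment is a place where the coupling can break, and I would show that at each such event there is a constant probability (something like $1/3$, matching the $3^r$ in the statement) of re-synchronizing the two walks within $O(1)$ steps, by having the lagging walk perform the complementary skip/stall move. Since there are at most $r$ such events, a union bound (or a product over independent re-synchronization attempts) gives $p_1 = \Omega(3^{-r})$, hence $\ln(1/p_1) = O(r)$. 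For the separation, I would combine the hash with an additional independent bit-sampling step on the transformed strings—exactly mirroring EmbedJoin's two-stage CGK-then-Hamming pipeline but now analyzed as one composed hash—so that when $\ED(x,y) > cr$ the images differ on a $\Theta(1/c)$ fraction of a long enough transformed prefix with high probability, yielding $\rho \approx 1/c$ and thus the $n^{1/c}$ factor.

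Given the hash family with parameters $(r, cr, p_1, p_2)$, the data structure and its guarantees follow from the standard LSH template: take $L = O(n^\rho)$ independent hash tables, each keyed by the concatenation of $k = \Theta(\log_{1/p_2} n)$ independent copies of $h$, store every database string in each table, and on a query hash $q$ into all $L$ tables and scan a bounded number of colliding candidates, computing $\ED(q,\cdot)$ to verify. The query time is $\tilde O(L \cdot (k \cdot T_h + d^2))$ where $T_h = \tilde O(d)$ is the cost to evaluate one hash and $d^2$ (absorbed in $\tilde O(d\,3^r n^{1/c})$ under our assumption $d = O(n)$, or handled by a faster $O(d + r^2)$ edit-distance check) is verification; plugging in $L = \tilde O(n^{1/c})$ and $k \cdot T_h = \tilde O(d)$ scaled by the $3^r$ loss inside $p_1$ gives the claimed $\tilde O(d\,3^r n^{1/c})$. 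Space is $L$ tables holding $n$ pointers plus the key lengths, i.e. $\tilde O(3^r n^{1+1/c})$, plus $\tilde O(dn)$ to store the strings themselves.

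The main obstacle I expect is the collision \emph{lower} bound under the coupling: one must choose the per-step transition probabilities so that a single edit costs only a constant factor (not $\mathrm{poly}(d)$) in re-synchronization probability, and must handle the fact that insertions and deletions shift the output indices—so ``$h(x)=h(y)$'' requires the two walks to agree not just on characters read but on total output length and position. Making the coupling re-align \emph{both} the pointer offsets and the output clocks after each edit, while keeping the re-sync probability bounded below by an absolute constant, is the delicate step; everything else is the standard LSH reduction plus bookkeeping.
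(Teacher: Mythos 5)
Your high-level intuition—a CGK-style walk where random ``insert/delete/stall'' decisions are made per position, collision probability analyzed via coupled walks, then plugged into an LSH template—is in the right family of ideas and matches the paper's motivation. But the proposal as written has a genuine gap, and it is exactly the gap the paper's construction is designed to close.

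The gap is in how you intend to get the $n^{1/c}$ factor. You claim a collision lower bound $p_1 = \Omega(3^{-r})$ from a ``constant re-sync probability per edit'' argument, and then propose to recover the separation by composing with a bit-sampling step on the transformed strings, so that far pairs differ on a $\Theta(1/c)$ fraction of coordinates, and finally to use the standard $L = n^{\rho}$, $k = \Theta(\log_{1/p_2} n)$ concatenation template. That last step requires a stretch bound on the CGK-like transformation: you need to argue that when $\ED(x,y) > cr$, the Hamming distance of the images is $\geq cr$ with overwhelming probability (or at least that the ratio of embedded distances is close to the ratio of edit distances). This is precisely the CGK stretch question, and the known worst-case constants are terrible (stretch roughly $2592r$), which is why the paper opens by noting that this pipeline only gives $n^{2592r/c+o(1)}$. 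Your coupling argument gives a lower bound on collision for close pairs, but says nothing about why a single embedding plus bit sampling separates far pairs well; the paper explicitly argues that this separation is the bottleneck, and that it cannot be fixed by tightening the coupling alone.

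The paper's actual fix is structurally different in two ways you have not anticipated. First, the per-step probabilities ($p_a$, $p_r$, both functions of a single knob $p$) are \emph{not} constants: $p$ is chosen polynomially small in $n$, namely $p \approx 1/(3(ncr)^{1/cr})$, so that one evaluation of the hash already achieves $p_2 \approx 1/n$. There is no concatenation of $k$ independent copies at all; the paper explicitly abandons the $\rho = \log(1/p_1)/\log(1/p_2)$ framework because of a stray constant factor in the far-pair bound ($(3p)^{cr}$ rather than $p^{cr}$), which would be amplified by concatenation. Second, the $3^r$ does not come from a per-edit re-synchronization probability of $1/3$; it comes from a union bound over all transformations that could take $x$ to $y$. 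The paper proves that the probability a specific transformation $T$ of length $t$ is induced is exactly $p^t$ (Lemma~\ref{lem:probinduces}), gets $p_1 \geq p^r$ by exhibiting one optimal transformation, and gets $p_2 \leq (3p)^{cr}$ by bounding the number of candidate transformations via a branching-factor-$3$ trie argument (Lemma~\ref{lem:looseupper}). Your proposal has no analogue of the upper-bound counting argument, and without it (or without the CGK stretch bound you are implicitly relying on), you cannot reach the claimed $n^{1/c}$ exponent.
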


 We also give a data structure that answers queries where the distance $r$ to the closest neighbor is not known during preprocessing. We call this the approximate \emph{nearest} neighbor search problem.

\begin{theorem}
  \thmlabel{nearestneighbor}
  There exists a data structure answering Approximate Nearest Neighbor Search queries under Edit Distance in $\tilde{O}(d3^rn^{1/c})$ time per query and $\tilde{O}(n^2)$ space.
\end{theorem}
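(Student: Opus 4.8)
The plan is to reduce the Approximate Nearest Neighbor problem to a logarithmic number of instances of the Approximate Similarity Search problem of \thmref{similarity}, one for each candidate scale of the unknown distance $r$. First I would observe that edit distance between two strings of length at most $d$ is an integer in the range $[0,2d]$, so there are only $O(\log d)$ ``geometrically spaced'' guesses $r_i = (1+\varepsilon)^i$ (or simply the powers of two $r_i = 2^i$ for $i = 0,1,\dots,\lceil\log_2 2d\rceil$) that we need to consider; losing the extra $(1+\varepsilon)$ factor only affects the approximation ratio in the low-order term and can be absorbed into the $\tilde{O}(\cdot)$ notation, or handled by a standard rescaling of $c$. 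For each guess $r_i$ I would build, during preprocessing, an independent copy of the data structure of \thmref{similarity} tuned to radius $r = r_i$ and approximation factor $c$.

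To answer a query $q$, I would query each of the $O(\log d)$ structures in turn, from the smallest radius upward, and return the first database string reported. The correctness argument is the usual one: if the true nearest neighbor is at distance $r^\ast$, let $r_j$ be the smallest guess with $r_j \ge r^\ast$; then the structure at level $j$ is guaranteed to return some database string within distance $c r_j \le c(1+\varepsilon) r^\ast$, so the overall algorithm returns a $c(1+\varepsilon)$-approximate nearest neighbor (and by reindexing $c$ this is a $c$-approximation). Crucially, none of the structures at levels below $j$ can return a string that is \emph{farther} than $c r_j$ away either: a structure tuned to radius $r_i$ with $r_i < r^\ast$ has no guarantee, but if it returns anything at all it certifies nothing harmful — to be safe one can simply have each level verify the returned candidate by computing its edit distance to $q$ in $O(d^2)$ time (which is dominated by the query cost) and discard it if it exceeds $c r_i$, so the first non-discarded answer is always within $c r_j$ of $q$.

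For the resource bounds: the query time is $O(\log d)$ times the worst-case per-query cost of \thmref{similarity}, and since the largest relevant radius is $r = O(d)$ but the \emph{actual} answer only needs the level $r_j = \Theta(r^\ast) \le 2r$ where $r$ is as in the theorem statement, the query time is $\tilde{O}(d 3^r n^{1/c})$, matching \thmref{similarity} up to the logarithmic factor folded into $\tilde{O}$. For space, naively summing the space of \thmref{similarity} over all $O(\log d)$ levels would give $\tilde{O}(3^r n^{1+1/c} + dn)$, but the theorem claims the cleaner bound $\tilde{O}(n^2)$; I would obtain this by using the assumptions $d = O(n)$ and alphabet size $O(n)$ together with the fact that $3^r n^{1/c}$ is only meaningful (i.e., beats the trivial linear-scan bound) when $3^r n^{1/c} = O(n)$, so in the regime where the data structure is useful each level uses $\tilde{O}(n\cdot n) = \tilde{O}(n^2)$ space, and outside that regime one falls back to storing the $n$ strings explicitly in $O(dn) = O(n^2)$ space and answering by brute force. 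Summing $\tilde{O}(n^2)$ over $O(\log d) = O(\log n)$ levels is still $\tilde{O}(n^2)$.

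The main obstacle I anticipate is not the reduction itself — which is entirely standard — but making the space bound honest: one has to argue carefully that the per-level space of \thmref{similarity}, namely $\tilde{O}(3^r n^{1+1/c} + dn)$, collapses to $\tilde{O}(n^2)$ in exactly the parameter range where the structure is worth building, and that in the complementary range the brute-force fallback is both correct and within budget. A secondary subtlety is ensuring the candidate-verification step does not blow up the query time; since verifying one candidate costs $O(d^2) = \tilde{O}(d 3^r n^{1/c})$ whenever the structure is non-trivial, this is fine, but it should be stated explicitly.
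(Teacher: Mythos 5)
Your reduction-by-radius-levels is the right high-level idea, and it is what the paper does, but the choice of geometrically spaced radii $r_i = 2^i$ (or $(1+\varepsilon)^i$) breaks the claimed time bound, and this is precisely the pitfall the paper flags when it says the standard Har-Peled--Indyk--Motwani black-box reduction does not apply here ``\emph{the exponential dependence on $r$ in our bounds requires us to instead use a problem-specific approach}.'' Concretely: if the true distance is $r$ and the smallest guess at least $r$ is $r_j$, then with powers of two you can have $r_j$ as large as $2r$, and the query at level $j$ costs $\tilde O(d\,3^{r_j} n^{1/c}) = \tilde O(d\,9^{r} n^{1/c})$, which is not $\tilde O(d\,3^r n^{1/c})$. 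Your statement that $r_j = \Theta(r^\ast)$ lets you write $\tilde O(d\,3^r n^{1/c})$ silently treats $3^{\Theta(r)}$ as $\tilde O(3^r)$, which is false; and ``rescaling $c$'' does not repair it either, since the issue is in the base of the exponential, not in the approximation guarantee. The paper instead iterates over \emph{every integer} radius $r^\ast = 1, 2, \ldots$ up to the cutoff $3^{r^\ast} n^{1/c} \le n$; this works exactly because the per-level cost $\tilde O(d\,3^{r^\ast}n^{1/c})$ forms a geometric series, so $\sum_{r^\ast=1}^{r} \tilde O(d\,3^{r^\ast}n^{1/c}) = \tilde O(d\,3^{r}n^{1/c})$, and edit distance is integer-valued so the first successful level is $r^\ast = r$ itself. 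With $O(d)$ integer levels rather than $O(\log d)$ geometric ones, the space still collapses to $\tilde O(n^2)$ by the cutoff definition (the per-level space $\tilde O(3^{r^\ast}n^{1+1/c} + dn)$ also telescopes geometrically and is bounded by $\tilde O(n^2)$ at the cutoff, using $d = O(n)$).

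A secondary gap: \thmref{similarity} only guarantees constant success probability ($> 1/10$), but the early-stopping argument needs that, when $r^\ast$ reaches the true distance $r$, the search terminates with high probability so it does not spill into more expensive levels. The paper handles this by building $\Theta(\log n)$ independent copies of each per-level structure and invoking a Chernoff bound; your writeup should make this boosting explicit, since without it the expected query time could exceed the stated bound. Your observation that returned candidates must be verified by an exact edit-distance computation is correct and matches what the paper does implicitly (``if we find a string at distance at most $cr^\ast$'').
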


\paragraph{Implications for Related Problems}
Our results lead to immediate bounds for similarity join, where all close pairs in a database are computed; see e.g.~\cite{PaghPhSi17,ZhangZhang17,ZhangZhang19}.  

Much of the previous work on approximate similarity search under edit distance considered a variant of this problem: there is a long text $T$, and we want to find all locations in $T$ that have low edit distance to the query $q$.  Our results immediately apply to this problem by treating all $d$-length substrings of $T$ as the database of items.

Frequently, practical situations may require that we find all of the neighbors with distance at most $r$, or (similarly) the $k$ closest neighbors. See e.g.~\cite{AhleAuPa17} for a discussion of this problem in the context of LSH.  Our analysis immediately applies to these problems.  However, if there are $k$ desired points, the running time increases by a factor $k$.

\subsection{Comparison to Known Results}
\seclabel{related}

In this section, we give a short summary of some key results for edit distance similarity search.  We focus on algorithms that have worst-case query time guarantees.  We refer the reader to~\cite{ZhangZhang17,ZhangZhang19} as good resources for related practical results, and~\cite{MaassMoNo05,Navarro01,Boytsov11} for a more extensive discussion of related work on the exact problem (with $c=1$).

\paragraph{Exact Similarity Search Under Edit Distance}
Exact similarity search under edit distance (i.e. with $c = 1$) has been studied for many years.
We focus on a breakthrough paper of Cole, Gottlieb, and Lewenstein that achieved space $O(n 5^r (1.5r + \log n)^r/r!)$ and query time $O(d + 6^r(1.5r + \log n)^r/r!)$~\cite{ColeGoLe04}.\footnote{These bounds are a slight simplification of the actual results using the AM-GM inequality.} We will call this structure the CGL tree. These bounds stand in contrast to previous work, which generally had to assume that the length of the strings $d$ or the size of the alphabet $|\Sigma|$ was a constant to achieve similar bounds.   
Later work has improved on this result to give similar query time with linear space~\cite{ChanLaSu06}.

Before comparing to our bounds, let us lower bound the CGL tree query time---while this gives a lower bound on an upper bound (an uncomfortable position since we are not specifying its exact relationship to the data structure), it will be helpful to get a high-level idea of how these results compare.  Using Sterling's approximation, and dropping the $+d$ term, we can simplify the query time to $\tilde{O}((6e/r)^r(1.5r + \log n)^r) \leq \tilde{O}((9e)^r(1 + (\log n)/(1.5r))^r)$.  From this final equation, we can see that even for very small $n$, the guaranteed query time is at least $(9e)^r > 24^r$; if $\log n \gg 1.5r$ it can become much worse.

Comparing the $(9e)^r$ term with our query time of $\tilde{O}(d 3^r n^{1/c})$, it seems that which is better depends highly on the use case---after all, we're exchanging a drastically improved exponential term in $r$ for a polynomial term in $n$. 

However, there is reason to believe that our approach has some significant advantages.  First, for $c$ bounded away from $1$, with moderate $n$ and small $d$, 
the CGL query time rapidly outpaces our own even for small $r$.
Let's do a back-of-the-envelope calculation with some reasonable parameters---we ignore constants here, but note that slight perturbations in $r$ easily make up for such discrepancies. If we have 400k strings of 500 characters\footnote{These are the parameters of the UniRef90 dataset from the UniProt Project~\url{http://www.uniprot.org/}, one protein genome dataset used as an edit distance similarity search benchmark~\cite{ZhangZhang17,ZhangZhang19}; other genomic datasets have (broadly) similar parameters.}  with $c=1.5$, $ 6^r(1.5r + \log n)^r/r! \geq d3^r n^{1/c}$ for $r > 4$.
In other words, even for very small search radii and fairly large $n$ (where the CGL tree excells), the large terms in the base of $r$ can easily overcome a polynomial-in-$n$ term.
Second, the constants in the CGL tree seem to be unfavorable: the CGL tree uses beautiful but nontrivial data structures for LCA and LCP that may add to the constants in the query time.  In other words, it seems likely that the CGL tree is most viable for even smaller values of $r$ than the above analysis would indicate.  

We suspect that these complications are part of the reason why state-of-the-art practical edit distance similarity search methods are based on heuristics or embeddings, rather than tree-based methods (see e.g.~\cite{ZhangZhang19}).




\paragraph{Approximate Similarity Search Under Edit Distance}
Previous results for approximate similarity search with worst-case bounds used either product metrics, or embeddings into $L_1$.

In techniques based on \defn{product metrics}, each point is mapped into several separate metrics.  The distance between two points is defined as their maximum distance in any of these metrics.  Using this concept, Indyk provided an extremely fast (but large) nearest-neighbor data structure requiring $O(d)$ query time and $O(n^{d^{1/(1 + \log c)}})$ space for any $c \geq 3$~\cite{Indyk04}.  

\paragraph{Embedding into $L_1$}
Because there are approximate nearest neighbor data structures for $L_1$ space that require $n^{1/c+ o(1)}$ time and $n^{1 + 1/c+ o(1)}$ space,\footnote{This can be improved to $n^{1/(2c - 1) + o(1)}$ and $n^{1+1/(2c-1)+o(1)}$ time and space respectively using data-dependent techniques, and can be further generalized to other time-space tradeoffs; see~\cite{AndoniLaRa17}.} an embedding with stretch $\alpha$ leads to an approximate nearest neighbor data structure with query time $n^{\alpha/c + o(1)}$ for $c > \alpha$.

A long line of work on improving the stretch of embedding edit distance into $L_1$
ultimately resulted in a deterministic embedding with stretch $\exp(\sqrt{\log d}/\log\log d)$~\cite{OstrovskyRabani07}, and 
a randomized embedding with stretch $O((\log d) 2^{O(\log^* d)} \log^* d)$~\cite{Jowhari12}.  

More recently, the CGK embedding parameterized by $r$ instead of $d$, giving an embedding into Hamming space\footnote{Hamming space and $L_1$ have the same state-of-the-art LSH bounds.}  with stretch $O(r)$~\cite{ChakrabortyGoKo16}.  
However, the constants proven in the CGK result are not very favorable---the upper limit on overall stretch given in the paper is $2592r$ (though this may be improvable with tighter random walk analysis).  
Thus, using the CGK embedding, and then performing the standard bit sampling LSH for Hamming distance on the result, gives an approximate similarity search algorithm with query time $n^{2592r/c + o(1)}$ so long as $c > 2592r$.  We describe in detail how our approach improves on this method in \secref{cgk_comparison}.

Zhang and Zhang~\cite{ZhangZhang17} implemented a modified and improved version of this approach; their results far outperformed the above analysis. Closing this gap between worst-case analysis and practical performance is one contribution of this work.



There is a lower bound of 
$\Omega(\log d)$
for the stretch of any embedding of edit distance into $L_1$%
~\cite{KhotNaor06}.  This implies that embedding into $L_1$ is a hopeless strategy for $c < \log d$, whereas we obtain nontrivial bounds even for constant $c$.
Thus, for this parameter range, using a locality-sensitive hash is fundamentally more powerful than embedding into $L_1$.


\paragraph{Locality-Sensitive Hashing}
An independent construction of an LSH for edit distance was given by Mar{\c{c}}ais et al.~\cite{MarcaisDePa19}.
Their work uses a fundamentally different approach, based on an ordered min-hash of k-mers.  Their results include bounds proving that the hash is locality-sensitive; however, they do not place any worst-case guarantees on the gap between the probability that close points collide and the probability that far points collide ($p_1$ and $p_2$ respectively in \defref{lsh}), and so do not obtain similarity search bounds. 

\paragraph{Exponential search cost}
To our knowledge, a trivial brute force scan is the only algorithm for approximate similarity search under edit distance whose worst-case cost is not exponential in the search radius $r$.  While we significantly improve this exponential term, removing it altogether remains an open problem.  A recent result of Cohen-Addad et al. gave lower bounds showing that, assuming SETH, there exist parameter settings such that cost exponential in $r$ is required for any edit distance similarity search algorithm~\cite{CohenAddadFeSt19}.  Due to some specifics of the parameter settings, their results do not imply that the exponential-in-$r$ term in our query time is necessary; however,  
this may give some indication as to why removing this term has proven so challenging.

\section{Model and Preliminaries}
\seclabel{prelim}


We denote the alphabet used in our problem instance as $\Sigma$.
We use two special characters $\bot$ and $\$$, which we assume are not in $\Sigma$.  
The hash appends $\$ $ to each string being hashed;  we call a string $\$ $-terminal if its last character is $\$ $, and it does not contain $\$$ in any other position.


We index into strings using 0-indexed subscripts; $x_0$ is the first character of $x$ and $x_i$ is the $i+1$st character.  We use $x[i]$ to denote the prefix of $x$ of length $i$; thus $x[i] = x_0\ldots x_{i-1}$.  Finally, we use $x\concat y$ to denote the concatenation of two strings $x$ and $y$, and $|x|$ to denote the length of a string $x$. 

\subsection{Edit Distance}

Edit distance is defined using three operations: inserts, deletes, and replacements.
Given a string $x = x_1x_2\ldots x_d$, inserting a character $\sigma$ at position $i$ results in a string $x' = x_1\ldots x_{i-1} \sigma x_i\ldots x_d$.  
Replacing the character at position $i$ with $\sigma$ results in $x' = x_1\ldots x_{i-1} \sigma x_{i+1}\ldots x_d$.  Finally, deletion of the character at position $i$ results in $x' = x_1\ldots x_{i-1}x_{i+1}\ldots x_d$.  
We refer to these three operations as \defn{edits}.
The edit distance from $x$ to $y$ is defined as the smallest number of edits that must be applied to $x$ to obtain $y$.  We denote this as $\ED(x,y)$.   


\subsection{Model and Problem Definition}

In this paper we solve the approximate similarity search problem under edit distance, which can be defined as follows.

\begin{definition}[Approximate Similarity Search Under Edit Distance]
  \deflabel{ASS}
  Given a set of $n$ strings $S$ and constants $c$ and $r$, preprocess $S$ to quickly answer queries of the form, ``if there exists a $y\in S$ with $\ED(q,y)\leq r$, return a $y'\in S$ with $\ED(q,y')\leq cr$ with probability $> 1/10$.''
\end{definition}

The above is sometimes called the approximate \defn{near} neighbor problem.  The constant $1/10$ is arbitrary and can be increased to any desired constant without affecting our final bounds.  


Oftentimes, we want to find the nearest database item to each query rather than parameterizing explicitly by $r$.

\begin{definition}[Approximate Nearest Neighbor Search Under Edit Distance]
  \deflabel{ANN}
  Given a set of $n$ strings $S$ and a constant $c$, preprocess $S$ to quickly answer queries of the form, ``for the smallest $r$ such that there exists a $y\in S$ with $\ED(q,y)\leq r$, return a $y'\in S$ with $\ED(q,y')\leq cr$ with probability $> 1/10$.''
\end{definition}

For most previous LSH-based approaches, efficient Nearest Neighbor Search algorithms follow immediately from Approximate Similarity Search algorithms using the black box reduction of Har-Peled, Indyk, and Motwani~\cite{HarPeledInMo12}.  However, the exponential dependence on $r$ in our bounds requires us to instead use a problem-specific approach.

\subsection{Locality-Sensitive Hashing}
\seclabel{lsh}

A hash family is \defn{locality sensitive} if close elements are more likely to hash together than far elements. 
Locality-sensitive hashing is one of the most effective methods for approximate similarity search in high dimensions~\cite{HarPeledInMo12,IndykMotwani98,AndoniLaRa17,Charikar02}.

\begin{definition}[Locality-Sensitive Hash]
  \deflabel{lsh}
  A hash family $\calH$ is $(r,cr,p_1,p_2)$-sensitive for a distance function $d(x,y)$ if 
  \begin{itemize}[topsep=0pt,noitemsep] 
    \item for all $x_1$, $y_1$ such that $d(x_1,y_1)\leq r$, $\Pr_{h\in\calH}(h(x_1) = h(y_1)) \geq p_1$, and 
    \item for all $x_2$, $y_2$ such that $d(x_2,y_2)\geq cr$, $\Pr_{h\in\calH}(h(x_1) = h(y_1)) \leq p_2$.
  \end{itemize}
\end{definition}

This paper gives the first direct\footnote{By direct, we mean that this hash does not embed into $L_1$ or use product metrics as an intermediary.} locality-sensitive hash under edit distance with closed forms for $p_1$ and $p_2$.
We bound $p_1$ and $p_2$ for any $r$ and $cr$ in \lemref{closepoints} and \lemref{looseupper} respectively.

Some previous work (i.e.~\cite{ChierichettiKuMa14,ColemanShBa19}) has a stricter definition of locality sensitive hash: it requires that there exists a function $f$ such that $\Pr(h(x) = h(y)) = f(d(x,y))$.  Our hash function does not satisfy this definition; the exact value of $x$ and $y$ is necessary to determine their collision probability (see \lemref{nomatchcollisionprob} for example).  

\paragraph{A Note on Concatenating Hashes}  Most previous approaches to nearest neighbor search begin with an LSH family that has $p_1,p_2 = \Omega(1)$.  A logarithmic number of independent hashes are concatenated together so that the concatenated function has collision probability $1/n$.  This technique was originally developed in~\cite{IndykMotwani98}, and has been used extensively since; e.g. in~\cite{AndoniLaRa17,ChristinianiPagh17,AhleAuPa17}.

However, in this paper, we use a single function each time we hash.  We directly set the hash parameters to achieve a desirable $p_1$ and $p_2$ (in particular, we want $p_2 \approx 1/n$).  This is due to the stray constant term in \lemref{looseupper}.  While our hash could work via concatenating several copies of a relatively large-probability\footnote{Although less than constant---\lemref{closepoints} and the assumption that $p \leq 1/3$ implies $p_1 \leq (1/3)^r$.} LSH, this would result in a data structure with larger space and slower running time.
One interesting implication is that, unlike many previous LSH results, our running time is not best stated with a parameter $\rho = \log p_1 / \log p_2$---rather, we choose our hashing parameters to obtain the $p_1$ and $p_2$ to give the best bounds for a given $r$, $c$, and $n$.

\section{The Locality-Sensitive Hash} \seclabel{algorithm}

Each hash function from our family maps a string $x$ of length $d$ with alphabet $\Sigma$ to a string $h(x)$ with alphabet $\Sigma\cup \{\bot\}$ of length $O(d + \log n)$.  The function scans over $x$ one character at a time, adding characters to $h(x)$ based on the current character of $x$ and the current length of $h(x)$.  Once the function has finished scanning $x$, it stops and outputs $h(x)$.

At a high level, for two strings $x$ and $y$, our hash function can be viewed as randomly guessing a sequence of edits $T$, where $h(x) = h(y)$ if and only if applying the edits in $T$ to $x$ obtains $y$.  Equivalently, one can view the hash as a random walk through the dynamic programming table for edit distance, where matching edges are traversed with probability $1$, and non-matching edges are traversed with a tunable probability $p \leq 1/3$.  We discuss these relationships in \secref{interpreting}. 

Note the contrast with the CGK embedding, which uses a similar mechanism to guess the \emph{alignment} between the two strings for each mismatch, rather than addressing each edit explicitly.  This difference is key to our improved bounds; see \secref{cgk_comparison}.

\paragraph{Parameters of the Hash Function}
We parameterize our algorithm using a parameter $p \leq 1/3$. 
By selecting $p$ we can control the values of $p_1$ and $p_2$ attained by our hash (see Lemmas~\ref{lem:closepoints} and~\ref{lem:looseupper}).  
We will specify $p$ to optimize nearest neighbor search performance for a given $r$, $c$, and $n$ in \secref{settingnn}.
  We split $p$ into two separate parameters $p_a$ and $p_r$ defined as 
  $ p_a = \sqrt{{p}/{1+p}} $ and $ p_r = \sqrt{p}/(\sqrt{1 + p} - \sqrt{p}).  $
Since $p \leq 1/3$, we have $p_a \leq 1/2$ and $p_r \leq 1$. 
For the remainder of this section, we will describe how the algorithm behaves using $p_a$ and $p_r$.
The rationale behind these values for $p_a$ and $p_r$ will become clear in the proof of \lemref{probinduces}. 







\paragraph{Underlying Function}
Each hash function in our hash family has an \defn{underlying function} that maps each \textit{(character, hash position)} pair to a pair of uniform random real numbers:  $\rho: \Sigma\cup\{\$\} \times \{1,\ldots,8d/(1-p_a) + 6\log n\} \rightarrow [0,1) \times [0,1)$.\footnote{Adding $\$$ to the alphabet allows us to hash past the end of a string---this helps with edits that {append} characters.}  We discuss  how to store these functions and relax the assumption that these are real numbers in \secref{practical}.

The only randomness used in our hash function is given by the underlying function.\footnote{In fact, the underlying function is a generalization of the random string used in the CGK embedding.}
In particular, this means that two hash functions $h_1$ and $h_2$ have identical outputs on all strings if their underlying functions $\rho_1$ and $\rho_2$ are identical. 
Thus, we pick a random function from our hash family by sampling a random underlying function. 
We use $h_\rho(x)$ to denote the hash of $x$ using underlying function $\rho$. 


The key idea behind the underlying function is that the random choices made by the hash depend only on the current character seen in the input string, and the current length of the output string.  This means that if two strings are aligned---in particular, if the ``current'' character of $x$ matches the ``current'' character of $y$---the hash of each will make the same random choices, so the hashes will stay the same until there is a mismatch.  This is the ``oblivious synchronization mechanism'' used in the CGK embedding~\cite{ChakrabortyGoKo16}.  

\subsection{How to Hash}
A hash function $h$ is selected from the family $\calH$
 by sampling a random underlying function $\rho$.  We denote the hash of a string $x$ using $\rho$ as $h_{\rho}(x)$. The remainder of this section describes how to determine $h_{\rho}(x)$ for a given $x$ and $\rho$.

To hash $x$, the first step is to append $\$$ to the end of $x$ to obtain $x\concat \$ $.  We will treat $x \gets x \concat \$ $ as the input string from now on---in other words, we assume that $x$ is $\$$-terminal.
Let $i$ be the current index of $x$ being scanned by the hash function. 
We will build up $h_\rho(x)$ character-by-character, storing intermediate values in a string $s$.
The hash begins by setting $i = 0$, and $s$ to the empty string.

The hash function repeats the following process while $i < |x|$ and\footnote{The requirement $|s|< 8d/(1-p_a) +6\log n$ is useful to bound the size of the underlying function in \secref{practical}.  We show in \lemref{shorthash} that this constraint is very rarely violated.} $|s| < 8d/(1-p_a) + 6\log n$. 
The hash first stores the current value of the underlying function based on $x_i$ and $|s|$ by setting $(r_1,r_2)\gets \rho(x_i,|s|)$.  
The hash performs one of three actions based on $r_1$ and $r_2$; in each case one character is appended to the string $s$.  We name these cases a \defn{hash-insert}, \defn{hash-replace}, and \defn{hash-match}.
\begin{itemize}[topsep=1pt,itemsep=1pt]
  \item \textbf{If $r_1 \leq p_a$, hash-insert:} append $\bot$ to $s$.  
  \item \textbf{If $r_1 > p_a$ and $r_2\leq p_r$, hash-replace:}  append $\bot$ to $s$ and increment $i$. 
  \item \textbf{If $r_1 > p_a$ and $r_2 > p_r$, hash-match:} append $x_i$ to $s$ and increment $i$.
\end{itemize}
\vspace{2pt}
When $i \geq |x|$ or $|s| \geq 8d/(1-p_a) + 6\log n$, the hash stops and returns $s$ as $h_\rho(x)$.  

We give pseudocode for this hash function and an example hash in Appendix~\ref{sec:example}.

%
\subsection{Comparing to the CGK Embedding}
\seclabel{cgk_comparison}
Our hash function follows some of the same high-level structure as the CGK embedding~\cite{ChakrabortyGoKo16}. In fact, our hash reduces to their embedding by omitting the appended character $\$$, and setting $p_a = 1/2$ and $p_r = 0$.  

However, our hash has two key differences over simply using the CGK embedding to embed into Hamming space, and then using bit sampling.  These differences work together to allow us to drastically improve the $n^{2592r/c + o(1)}$ bound we obtained in \secref{related}.  

First, we modify $p_a$; that is, we modify the probability that we stay on a single character $x_i$ of the input string for multiple iterations.  Second, we combine the embedding and bit sampling into a single step---this means that we can take the embedding into account when deciding whether to sample a given character.

Combining into one step already gives an inherent improvement.  
After embedding, we do not want to sample a ``repeated'' character---this is far less useful than sampling a character the last time it is written, after the hash has attempted to align them.  Thus, we only sample a character (with probability $1-p_r$) the last time that character is written.  

However, the significant speedup comes from using \emph{repeated} embeddings---in short, at a high level, each LSH in our approach consists of a single CGK embedding with a single bit sampling LSH.  If a single embedding is used, the performance of the algorithm as a whole has the expected stretch of that single embedding as a bottleneck. As a result, the expected stretch winds up in the exponent of $n$, and $c$ must be at least as large as the expected stretch to guarantee correctness.   By repeatedly embedding, our bounds instead depend (in a sense) on the \emph{best}-case stretch over the many embeddings.

These repeated embeddings is where these two differences---modifying $p_a$ and integrating into a single LSH---act in concert.  A back-of-the-envelope calculation implies that a CGK embedding will have stretch\footnote{To be more precise, with probability $1/4^r$ one string with distance $r$ from the query will have embedded Hamming distance $r$, while all strings with distance $x$ will have embedded Hamming distance $\geq x/2$.} 2 with probability at least $1/4^r$.  But this does not immediately imply a good algorithm: if we perform $4^r$ embeddings, how well will we do in the cases that don't have $O(1)$ stretch?  Meanwhile, any constant loss in the analysis winds up in the exponent of $n$---that is to say, the back-of-the-envelope analysis still isn't tight enough.  Overall, with the CGK embedding as a black box, a full analysis would require an analysis (with tight constants) detailing the probability that an embedding has any given stretch.  Instead, by combining these approaches in a single LSH, we can instead model the entire problem as a single random walk in a two-dimensional grid.  

Overall, a combined approach gives better worst-case performance, and a unified (and likely simpler) framework for analysis.

\section{Analysis}
\seclabel{analysis}
In this section we show how analyze the hash given in \secref{algorithm}, ultimately proving Theorems~\ref{thm:similarity} and~\ref{thm:nearestneighbor}.

We begin in \secref{interpreting} with some structure that relates hash collisions between two strings $x$ and $y$ with sequences of edits that transform $x$ into $y$.  We use this to bound the probability that $x$ and $y$ collide in \secref{collisionprobs}.  With this we can prove our main results in Sections~\ref{sec:settingnn} and~\ref{sec:nearestneighbor}. Finally we discuss how to store the underlying functions in \secref{practical}.  

\subsection{Interpreting the Hash}
\seclabel{interpreting} 

In this section, we discuss when two strings $x$ and $y$ hash (with underyling function $\rho$) to the same string $h_\rho(x) = h_\rho(y)$.  

We define three sequences to help us analyze the hash.  In short, the \defn{transcript} of $x$ and $\rho$ lists the decisions made by the hash function as it scans $x$ using the underlying function $\rho$.  The \defn{grid walk} of $x$, $y$, and $\rho$ is a sequence based on the transcripts (under $\rho$)  of $x$ and $y$---it consists of some edits, and some extra operations that help keep track of how the hashes of $x$ and $y$ interact.  Finally, the \defn{transformation} of $x$, $y$, and $\rho$ is a sequence of edits based on the grid walk of $x$, $y$, and $\rho$.

Using these three sequences, we can set up the basic structure to bound the probability that $x$ and $y$ hash together using their edit distance.  We use these definitions to analyze the probability of collision in \secref{collisionprobs}.

\paragraph{Transcripts.}
A \defn{transcript} is a sequence of hash operations: each element of the sequence is a hash-insert, hash-replace, or hash-match.  Essentially, the transcript of $x$ and $\rho$, denoted $\tau(x,\rho)$, is a log of the actions taken by the hash on string $x$ using underlying function $\rho$.

We define an index function $i(x,k,\rho)$.  The idea is that $i(x,k,\rho)$ is the value of $i$ when the $k$th hash character is written, if hashing $x$ using underlying function $\rho$.

We set $i(x,0,\rho) = 0$ for all $x$ and $\rho$.
Let $(r_{1,k},r_{2,k}) = \rho(x_{i(x,k,\rho)},k)$.  
We can now recursively define both $\tau(x,\rho)$ and $i(x,k,\rho)$.  We denote the $k$th character of $\tau(x,\rho)$ using $\tau_k(x,\rho)$. 
\begin{itemize}[topsep=0pt,noitemsep]
  \item If $r_{1,k} \leq p_a$, then $i(x,k+1,\rho) = i(x,k,\rho)$, and $\tau_k(x,\rho) = \text{hash-insert}$.  
  \item If $r_{1,k} > p_a$ and $r_{2,k}\leq p_r$, then $i(x,k+1,\rho) = i(x,k,\rho) + 1$, and $\tau_k(x,\rho) = \text{hash-replace}$.  
  \item If $r_{1,k} > p_a$ and $r_{2,k} >  p_r$, then $i(x,k+1,\rho) = i(x,k,\rho) + 1$, and $\tau_k(x,\rho) = \text{hash-match}$.  
\end{itemize} 

A transcript $\tau(x,\rho)$ is \defn{complete} if $|\tau(x,\rho)| < 8d/(1-p_a) + 6\log n$.  

%


\begin{lemma}
  \lemlabel{shorthash}
  For any string $x$ of length $d$,
  $ \Pr_\rho[\tau(x,\rho)\text{ is complete}] \geq 1 - 1/n^2 $.
\end{lemma}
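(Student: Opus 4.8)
The plan is to read the transcript length off as a tail event for a sum of independent geometric variables and then apply a Chernoff bound. First I would set $m$ to be the length of the string actually scanned (that is, $x$ with the sentinel \$ appended, so $m \le d+1$), and observe that a single hash step advances the input pointer $i$ exactly when $r_1 > p_a$ (a hash-replace or hash-match) and leaves $i$ fixed when $r_1 \le p_a$ (a hash-insert). Hence $i$ reaches $m$ after exactly $m$ advancing steps, and $|\tau(x,\rho)| = m + Z$, where $Z$ is the total number of hash-inserts performed before that happens. By definition $\tau(x,\rho)$ is complete iff $m + Z < 8d/(1-p_a) + 6\log n$; a short calculation using $m \le d+1$ and $p_a \le 1/2$ (which holds since $p \le 1/3$) shows $8d/(1-p_a) + 6\log n - m \ge 6d/(1-p_a) + 6\log n$, so it suffices to prove $\Pr_\rho[Z \ge 6d/(1-p_a) + 6\log n] \le 1/n^2$.

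Next I would pin down the distribution of $Z$. The advancing indicator at step $k$ is $\mathbf{1}[r_{1,k} > p_a]$, where $(r_{1,k},r_{2,k}) = \rho(x_{i(x,k,\rho)},k)$. Although the character $x_{i(x,k,\rho)}$ that step $k$ looks up is itself a function of $\rho$, it is determined entirely by steps $0,\dots,k-1$, whereas the pair $\rho(\cdot,k)$ is an independent uniform draw consulted at no other step, since the second coordinate $k$ is distinct every step. So, conditioned on the history through step $k-1$, $r_{1,k}$ is uniform on $[0,1)$; thus the advancing indicators form an i.i.d.\ $\mathrm{Bernoulli}(1-p_a)$ sequence $B_0,B_1,\dots$, and $Z$ is the number of $0$'s before the $m$-th $1$, i.e.\ a sum of $m$ i.i.d.\ geometric variables with mean $p_a/(1-p_a)$ each.

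The rest is a routine computation. For a positive integer $t$ we have $Z \ge t$ iff among the first $t+m-1$ of the $B_j$'s there are at most $m-1$ ones; taking $t = \lceil 6d/(1-p_a) + 6\log n\rceil$ and $N = t+m-1$ gives $\Pr[Z \ge t] = \Pr[\mathrm{Bin}(N,1-p_a) \le m-1] \le \Pr[\mathrm{Bin}(N,1-p_a) \le d]$. Here $\mu := \mathbb{E}[\mathrm{Bin}(N,1-p_a)] = N(1-p_a) \ge 6d + 6(1-p_a)\log n \ge 6d + 3\log n$, so we are asking the binomial to fall below $\mu/6$, and the multiplicative Chernoff bound $\Pr[X \le (1-\delta)\mu] \le e^{-\mu(\delta + (1-\delta)\ln(1-\delta))}$ with $1-\delta = 1/6$ gives $\Pr[Z \ge t] \le e^{-0.53(6d + 3\log n)} \le e^{-1.6\log_2 n} \le n^{-2}$. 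This is precisely why the constants $8$ and $6$ sit in the length cap: the expected number of hash operations is only $m/(1-p_a) \le 2d/(1-p_a)$, so the cap leaves a constant-factor multiplicative slack together with an additive $6\log n$, which is exactly what makes the deviation probability beat $1/n^2$.

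I expect the only genuinely delicate step to be the independence claim in the second paragraph: making precise that the coin $r_{1,k}$ is an unbiased, independent bit even though the location in $\rho$ from which it is drawn is itself random. This is a short filtration argument—condition on the $\sigma$-algebra generated by steps $0,\dots,k-1$, note this determines $i(x,k,\rho)$ but is independent of $\rho(\cdot,k)$, and conclude $r_{1,k}$ is still uniform given the history. Everything else is bookkeeping; the one place to stay careful is the floor/ceiling and $\pm 1$ slop when passing between ``$|\tau| < $ cap,'' the sum of geometrics $Z$, and the binomial tail, but those cost only additive constants that the generous leading constants absorb.
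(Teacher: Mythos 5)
Your proposal takes essentially the same approach as the paper: bound the excess transcript length by the number of hash-insert operations, view those as a negative-binomial (sum of geometrics, equivalently waiting for $m$ heads in an i.i.d.\ $\mathrm{Bernoulli}(1-p_a)$ stream), and apply a Chernoff bound. The paper's proof is terser (it cites a Chernoff bound for sums of geometrics directly and does not spell out the filtration argument for per-step independence), but the decomposition, the quantity being concentrated, and the tool used are the same as yours.
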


\begin{proof}
  If $\tau(x,\rho)$ has $\ell$ hash-insert operations, then $|\tau(x,\rho)| \leq d + \ell$.  We bound the probability that $\ell > 7d/(1 - p_a) + 6\log n$.

  For each character in $x$, we can model the building of $\tau(x,\rho)$ as a series of independent coin flips. On heads (with probability $p_a$), $\ell$ increases; on tails the process stops.  Thus we expect $1/(1-p_a)$ hash-insert operations for each character of $x$, and at most $d/(1 - p_a)$ hash-insert operations overall.

  Using standard Chernoff bounds (i.e.~\cite[Exercise 4.7]{MitzenmacherUpfal17}), the probability that $\ell > 7d/(1 - p_a) + 6\log n$ is at most $\exp((6d(1-p_a) + 6\log n)/3) < 1/n^2$.
\end{proof} 

\paragraph{Grid Walks.}  A \defn{grid walk} $g(x,y,\rho)$ for two strings $x$ and $y$ and underlying function $\rho$ is a sequence that helps us examine how $h_{\rho}(x)$ and $h_{\rho}(y)$ interact---it is a bridge between the transcript of $x$, $y$ and $\rho$, and the transformation induced by $x$, $y$, and $\rho$ (which is a sequence of edits).
We formally define the grid walk, and discuss how it corresponds to a random walk in a graph.  This graph is closely based on the dynamic programming table for $x$ and $y$.

The grid walk is a sequence of length $\max\{|\tau(x,\rho)|,|\tau(y,\rho)|\}$.  The grid walk has an alphabet of size 6: each character is one of $\{\ins, \del,\rep,\loo,\match,\die\}$.  
At a high level, 
\ins, \del, and \rep{} correspond to string edits---for example, \ins{} corresponds to the index of $x$ being incremented while the index of $y$ stays the same (as if we inserted the corresponding character into $y$).  \loo{} corresponds to both strings writing $\bot$ without increasing $i$; the process ``loops'' and we continue with nothing changed except the length of the hash.  \match{} corresponds to the case when both hashes simultaneously evaluate the same character---after a sequence of \loo{} operations, they will \match{} by both writing out either the matching character or $\bot$ to their respective hashes.  \die{} is a catch-all for all other cases: the strings write out different characters, the hashes are no longer equal, and the analysis stops.

We define a directed graph $G(x,y)$ to help explain how to construct the walk.  Graph $G(x,y)$ is a directed graph with $|x||y|+ 1$ nodes, corresponding roughly to the dynamic programming table between $x$ and $y$. We label one node as the \defn{\die{} node}.  We label the other $|x||y|$ nodes using two-dimensional coordinates $(i,j)$ with $0\leq i < |x|$, and $0\leq j< |y|$.

\begin{figure}
  \centering
\begin{tikzpicture}
  \def\gridsize{1.5}
  \begin{scope}[every node/.style={circle,thick,draw}] 
    \node[label={[label distance=-5pt]$(i,j)$}] (A) at (0, \gridsize) {};
    \node[label={[label distance=-5pt]330:$(i+1,j)$}] (B) at (\gridsize, \gridsize) {};
    \node[label={[label distance=-20pt]330:$(i+1,j+1)$}] (C) at (\gridsize, 0) {};
    \node[label={[label distance=-15pt]330:$(i,j+1)$}] (D) at (0, 0) {};
    \node[label={[label distance=-5pt]330:\die}] (E) at (-\gridsize/4,-\gridsize/2.3) {};
  \end{scope} 
    \path [->,draw,thick,sloped,anchor=south,auto=false] 
      (A) edge node {\small \del} (B)
      (A) edge node {\small \rep} (C)
      (A) edge node {\small \ins} (D)
      (A) edge [bend right] node {\small \die} (E); 
    \path [->,draw,thick,auto] 
      (A) edge [loop left] node {\small \loo} (A);
      \node[label={(a)}] at (-\gridsize,-\gridsize/1.2) {};
\end{tikzpicture}
\begin{tikzpicture}
  \def\gridsize{1.5}
  \begin{scope}[every node/.style={circle,thick,draw}] 
    \node[label={[label distance=-5pt]$(i,j)$}] (A) at (0, \gridsize) {};
    \node[label={[label distance=-5pt]330:$(i+1,j)$}] (B) at (\gridsize, \gridsize) {};
    \node[label={[label distance=-20pt]330:$(i+1,j+1)$}] (C) at (\gridsize, 0) {};
    \node[label={[label distance=-15pt]330:$(i,j+1)$}] (D) at (0, 0) {};
    \node[label={[label distance=-5pt]330:\die}] (E) at (-\gridsize/4,-\gridsize/2.3) {};
  \end{scope} 
    \path [->,draw,thick,sloped,anchor=south,auto=false] 
      (A) edge node {\small \match} (C);
    \path [->,draw,thick,auto] 
      (A) edge [loop left] node {\small \loo} (A);
      \node[label={(b)}] at (-\gridsize,-\gridsize/1.2) {};
\end{tikzpicture}
  \vspace{-.5cm}
    \caption{The edges for a single node $(i,j)$ with $i < |x|-1$ and $j < |y| - 1$. (a) represents the edges if $x_i\neq y_j$; (b) represents the edges if $x_i = y_j$.}
  \figlabel{singlenodeedges}
    \vspace{-.7cm}
\end{figure}
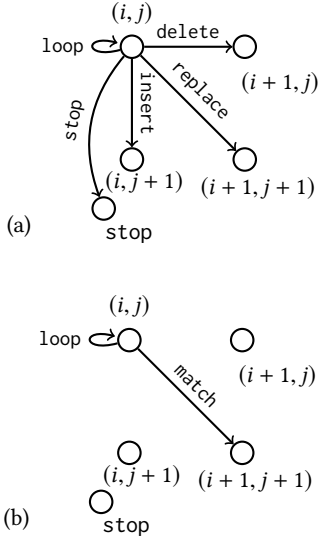

We now list all arcs between nodes.   We label each with a grid walk character; this will be useful for analyzing $g(x,y,\rho)$. 
Consider an $(i,j)$ with $0\leq i< |x|-1$ and $0\leq j< |y|-1$.  For any $(i,j)$ with $x_i\neq y_j$, we place five arcs:
\begin{itemize}[topsep=0pt,noitemsep]
  \item a \del{} arc from $(i,j)$ to $(i+1,j)$, 
  \item a \rep{} arc from $(i,j)$ to $(i+1,j+1)$,
  \item an \ins{} arc from $(i,j)$ to $(i,j+1)$, 
  \item a \loo{} arc from $(i,j)$ to $(i,j)$, and
  \item a \die{} arc from $(i,j)$ to the \die{} node.
\end{itemize}
These arcs are shown in \figref{singlenodeedges}a.
For any $(i,j)$ with $x_i = y_j$, we place two edges: a \match{} arc from $(i,j)$ to $(i+1,j+1)$, and a \loo{} arc from $(i,j)$ to $(i,j)$%
; see \figref{singlenodeedges}b.

  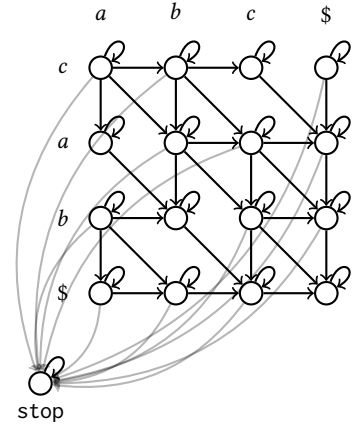
\begin{figure}
    \centering
\begin{tikzpicture}
    \foreach \i in {0,...,3}
       \foreach \j in {0,...,3}
          \node[circle,thick,draw] (\i\j) at (\i,3-\j) {};

    \node[label=270:\die,circle,thick,draw] (stop) at (-.8,-1.2) {};

    \node (x1) at (0,3.7) {$a$};
    \node (x2) at (1,3.75) {$b$};
    \node (x3) at (2,3.7) {$c$};
    \node (x4) at (3,3.7) {$\$$};

    \node (y1) at (-.5,3) {$c$};
    \node (y2) at (-.5,2) {$a$};
    \node (y3) at (-.5,1) {$b$};
    \node (y4) at (-.5,0) {$\$$};

    \foreach \i in {0,...,2}
       \foreach \j in {0,...,2}
          \path[->,draw,thick,auto] (\i\j) edge node {} (\the\numexpr\i+1\relax\the\numexpr\j+1\relax);

    \foreach \i in {0,...,3}
       \foreach \j in {0,...,3}
          \path[->,draw,thick,distance=4mm,in=30,out=70] (\i\j) edge node {} (\i\j);

    \path[->,draw,thick,distance=4mm,in=30,out=70] (stop) edge node {} (stop);

       \path[->,draw,thick] (00) edge node {} (01);
       \path[->,draw,thick] (10) edge node {} (11);
       \path[->,draw,thick] (11) edge node {} (12);
       \path[->,draw,thick] (21) edge node {} (22);
       \path[->,draw,thick] (02) edge node {} (03);
       \path[->,draw,thick] (22) edge node {} (23);
       \path[->,draw,thick] (30) edge node {} (31);
       \path[->,draw,thick] (31) edge node {} (32);
       \path[->,draw,thick] (32) edge node {} (33);

       \path[->,draw,thick] (00) edge node {} (10);
       \path[->,draw,thick] (10) edge node {} (20);
       \path[->,draw,thick] (11) edge node {} (21);
       \path[->,draw,thick] (21) edge node {} (31);
       \path[->,draw,thick] (02) edge node {} (12);
       \path[->,draw,thick] (22) edge node {} (32);
       \path[->,draw,thick] (03) edge node {} (13);
       \path[->,draw,thick] (13) edge node {} (23);
       \path[->,draw,thick] (23) edge node {} (33);

       \path[->,draw,thick,opacity=.29] (00) edge [bend right] node {} (stop);
       \path[->,draw,thick,opacity=.29] (10) edge [bend right] node {} (stop);
       \path[->,draw,thick,opacity=.29] (11) edge [bend right] node {} (stop);
       \path[->,draw,thick,opacity=.29] (21) edge [bend right] node {} (stop);
       \path[->,draw,thick,opacity=.29] (02) edge [bend right] node {} (stop);
       \path[->,draw,thick,opacity=.29] (22) edge [bend left] node {} (stop);
       \path[->,draw,thick,opacity=.29] (30) edge [bend left] node {} (stop);
       \path[->,draw,thick,opacity=.29] (31) edge [bend left] node {} (stop);
       \path[->,draw,thick,opacity=.29] (32) edge [bend left] node {} (stop);
       \path[->,draw,thick,opacity=.29] (03) edge [bend left] node {} (stop);
       \path[->,draw,thick,opacity=.29] (13) edge [bend left] node {} (stop);
       \path[->,draw,thick,opacity=.29] (23) edge [bend left] node {} (stop);

\end{tikzpicture}
    \caption{This figure shows $G(x,y)$ for $x = abc\$ $ and $y = cab\$ $.  For clarity, all edge labels are ommited and \die{} edges are partially transparent.}
    \figlabel{graphexample}
\end{figure}

The rightmost and bottommost nodes of the grid are largely defined likewise, but arcs that lead to nonexistant nodes instead lead to the \die{} node.\footnote{Since $x$ and $y$ are $\$$-terminal, these nodes never satisfy $x_i = y_j$ except at $(|x|-1,|y|-1)$}  
For $0\leq j < |y|-1$
there is an \ins{} arc from $(|x|-1,j)$ to $(|x|-1,j+1)$ a \die{} arc, \del{} arc, and \rep{} arc from $(|x|-1,j)$ to the \die{} node, and a \loo{} arc from $(|x|-1,j)$ to $(|x|-1,j)$.  
For $0\leq i< |x|-1$, there is 
a \del{} arc from $(i,|y|-1)$ to $(i+1,|y|-1)$, a \die{} arc, an \ins{} arc, and a \rep{} arc from $(i,|y|-1)$ to the \die{} node, and a \loo{} arc from $(i,|y|-1)$ to $(i,|y|-1)$.  Finally, node $(|x|-1,|y|-1)$ has a \loo{} arc to $(|x|-1,|y|-1)$.
See \figref{graphexample}.

The stop node has (for completeness) six self loops with labels \match{}, \ins{}, \rep{}, \del{}, \loo{}, and \die{}.

\begin{table}
  \centering
\begin{tabular}{|c|c|c|}
  \hline
    $\tau_k(x,\rho)$ & $\tau_k(y,\rho)$ & $g_k(x,y,\rho)$ \\
    \hline 
    \hline
    hash-replace & hash-replace & \rep \\
    \hline
    hash-replace & hash-insert & \del \\
    \hline
    hash-insert & hash-replace & \ins \\
    \hline
    hash-insert & hash-insert & \loo \\
    \hline
    hash-match & - & \die \\
    \hline
     - & hash-match & \die \\
    \hline 
\end{tabular}
  \caption{This table defines a grid walk for non-matching characters in strings $x$ and $y$, given the corresponding transcripts.}
\tablabel{gridwalk}
  \vspace{-.8cm}
\end{table}

  We now define the grid walk $g(x,y,\rho)$.  We will use $G(x,y)$ to relate $g(x,y,\rho)$ to $h_{\rho}(x)$ and $h_{\rho}(y)$ in Lemmas~\ref{lem:walkcorrectness} and~\ref{lem:walkcollision}.

  We determine the $k$th character of $g(x,y,\rho)$, denoted $g_k(x,y,\rho)$, using $\tau_k(x,\rho)$ and $\tau_k(y,\rho)$, as well as $x_{i(x,k,\rho)}$ and $y_{i(y,k,\rho)}$.  For $k > \min\{|\tau(x,\rho)|,|\tau(y,\rho)|\}$, $g_k(x,y,\rho) = \die$.

  If $x_{i(x,k,\rho)} \neq y_{i(y,k,\rho)}$, we define $g_k(x,y,\rho)$ using \tabref{gridwalk}.  
  
  If $x_{i(x,k,\rho)} = y_{i(y,k,\rho)}$, then $\tau_k(x,\rho) = \tau_k(y,\rho)$.  If $\tau_k(x,\rho)= \tau_k(y,\rho)$ is a hash-insert, then $g_k(x,y,\rho) = \loo$; otherwise, $g_k(x,y,\rho) = \match{}$.

  We say that a grid walk is \defn{complete} if 
  both $\tau(x,\rho)$ and $\tau(y,\rho)$ are complete.
  We say that a grid walk is \defn{alive} if it is complete and it does not contain \die{}.

  The next lemma motivates this definition: the grid walk defines a path through the grid corresponding to the hashes of $x$ and $y$.
  \begin{lemma}
    \lemlabel{walkcorrectness}
    Consider a walk through $G(x,y)$ which at step $i$ takes the edge with label corresponding to $g_i(x,y,\rho)$. 
    Assume $k$ is such that the prefix $g(x,y,\rho)[k]$ of length $k$ is alive.   
    Then after $k$ steps, the walk arrives at node $(i(x,k,\rho),i(y,k,\rho))$.  
  \end{lemma}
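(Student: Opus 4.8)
The plan is to prove the statement by induction on the number of steps $k$. The base case $k=0$ is immediate: the length-$0$ prefix is vacuously alive, the walk starts at the node $(0,0)$, and $(0,0)=(i(x,0,\rho),i(y,0,\rho))$ by the convention $i(x,0,\rho)=i(y,0,\rho)=0$.

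For the inductive step, assume the claim for $k$ and suppose the length-$(k+1)$ prefix of $g(x,y,\rho)$ is alive; its length-$k$ prefix is then also alive, so the inductive hypothesis places the walk at the node $(i,j):=(i(x,k,\rho),i(y,k,\rho))$ after $k$ steps. Let $a$ be the grid-walk character governing step $k+1$. Since the prefix is alive, $a\ne\die$, so $a\in\{\ins,\del,\rep,\loo,\match\}$, and I would finish with a case analysis over these five values. In each case the argument chains three definitions together: (i) the definition of the grid walk --- \tabref{gridwalk} when $x_i\ne y_j$, and the matching rule when $x_i=y_j$ (which applies because $\rho(x_i,k)=\rho(y_j,k)$ forces $\tau_k(x,\rho)=\tau_k(y,\rho)$) --- tells us which of hash-insert / hash-replace / hash-match each of $\tau_k(x,\rho),\tau_k(y,\rho)$ is; (ii) the recursive definition of the index function then says how $i(\cdot,k+1,\cdot)$ arises from $i(\cdot,k,\cdot)$, namely a hash-insert leaves the index fixed while a hash-replace or hash-match increments it; and (iii) the arc list of $G(x,y)$ gives the head of the $a$-labelled arc out of $(i,j)$. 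One then checks the two nodes coincide. For instance $a=\rep$ forces hash-replace for both strings, so both indices advance, matching the \rep{} arc $(i,j)\to(i+1,j+1)$; $a=\del$ forces hash-replace for $x$ and hash-insert for $y$, so only the first index advances, matching $(i,j)\to(i+1,j)$; $a=\ins$ is symmetric; $a=\loo$ forces hash-insert for both strings (via the table, or as the matching sub-case), so neither index moves, matching the \loo{} self-loop; and $a=\match$ arises only when $x_i=y_j$ with a common hash-replace or hash-match step, so both indices advance, matching the \match{} arc $(i,j)\to(i+1,j+1)$.

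The place that actually requires thought is the last row and column of $G(x,y)$, where arcs that would leave the grid are redirected to the \die{} node, and the corner $(|x|-1,|y|-1)$, which carries only a \loo{} self-loop. Here I would lean on $\$$-terminality of $x$ and $y$: as soon as one of the two current indices has reached the last position of its string while the other has not, the two current characters are a $\$$ against a non-$\$$ and so differ, and reading \tabref{gridwalk} in that regime one sees that the transition either produces the \die{} label --- contradicting aliveness of the prefix --- or produces a label whose boundary arc still lands on the intended node; the corner itself only ever needs its \loo{} self-loop. This is also where one has to fix the precise index conventions (0- versus 1-indexing of the transcript and of the walk, and the exact meanings of ``prefix of length $k$'' and ``alive''), since it is those conventions that make ``the length-$k$ prefix is alive'' line up exactly with ``after $k$ steps the walk still sits on the legitimate node $(i(x,k,\rho),i(y,k,\rho))$.''

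I expect the five-way case analysis to be routine bookkeeping once the three definitions are aligned; the main obstacle is the boundary/corner discussion above --- verifying that aliveness of the prefix is precisely the hypothesis that prevents the walk from running off the grid or into \die{}, and getting the indexing conventions consistent so that the base case and the inductive step fit together. Downstream this lemma is exactly what lets one convert ``the grid walk is alive'' into ``$h_\rho(x)=h_\rho(y)$'' (together with a companion collision lemma), and thence into the collision-probability bounds used in \secref{collisionprobs}.
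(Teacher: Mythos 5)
Your proof is correct and follows the same route as the paper's: induction on the step count $k$, with the base case at $(0,0)$ and a five-way case split on the grid-walk label, in each branch chaining together the transcript definition, the index-function recursion, and the arc list of $G(x,y)$. The paper likewise folds well-definedness (that the required outgoing arc exists at the current node) into the induction and leans on $\$$-terminality plus aliveness to rule out walking off the grid, so your boundary/corner discussion matches what the paper's argument implicitly uses.
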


\begin{proof}
Our proof is by induction on $k$.  We prove both that the walk arrives at node $(i(x,k,\rho),i(y,k,\rho))$, and that the walk is well-defined: the next character in $g(x,y,\rho)$ always corresponds to an outgoing edge of the current node.   

    For the base case $k = 0$ the proof is immediate, since $(i(x,0,\rho),i(y,0,\rho)) = (0,0)$.  Furthermore, node $(0,0)$ has an outgoing \match{} edge if and only if $x_0 = y_0$ (otherwise it has an outgoing \ins{}, \del{}, and \rep{} edge); similarity, $g_0(x,y,\rho) = \match{}$ only if $x_0 = y_0$ (the rest of the cases follow likewise). 

    Assume that after $k-1$ steps, the walk using $g(x,y,\rho)[k-1]$ arrives at node $(i(x,k-1,\rho),i(y,k-1,\rho))$.  We begin by proving that the walk remains well-defined.  We have $g_{k-1}(x,y,\rho) = \match{}$ only if $x_{i(x,k-1,\rho)} = y_{i(y,k-1,\rho)}$; in this case $(i(x,k-1,\rho),i(y,k-1,\rho))$ has an outgoing \match{} edge.  We have $g_{k-1}(x,y,\rho) = \ins{}$ (or \del{} or \rep{}) only if $x_{i(x,k-1,\rho)} \neq y_{i(y,k-1,\rho)}$; again, node $(i(x,k-1,\rho),i(y,k-1,\rho))$ has the corresponding outgoing edge.  All nodes have outgoing \loo{} and \die{} edges.

    Now we show that after $k$ steps, the walk using $g(x,y,\rho)[k]$ arrives at node $(i(x,k,\rho),i(y,k,\rho))$.  We split into five cases based on $g_{k-1}(x,y,\rho)$ (if $g_{k-1}(x,y,\rho) = \die{}$ the lemma no longer holds).
    \begin{itemize}
      \item[\rep{}] We have $\tau_k(x,\rho) = $ hash-replace, and $\tau_k(y,\rho) = $ hash-replace.  Thus, $i(x,k,\rho) = i(x,k-1,\rho) + 1$ and $i(y,k,\rho) = i(y,k-1,\rho) + 1$.  In $G(x,y)$, the edge labelled \rep{} leads to node $(i(x,k-1,\rho)+1,i(y,k-1,\rho) + 1)$.
      \item[\match{}] We have $\tau_k(x,\rho) = $ hash-replace, and $\tau_k(y,\rho) = $ hash-replace.  Thus, $i(x,k,\rho) = i(x,k-1,\rho) + 1$ and $i(y,k,\rho) = i(y,k-1,\rho) + 1$.  In $G(x,y)$, the edge labelled \match{} leads to node $(i(x,k-1,\rho)+1,i(y,k-1,\rho) + 1)$.
      \item[\del{}] We have $\tau_k(x,\rho) = $ hash-replace, and $\tau_k(y,\rho) = $ hash-insert.  Thus, $i(x,k,\rho) = i(x,k-1,\rho) + 1$ and $i(y,k,\rho) = i(y,k-1,\rho)$.  In $G(x,y)$, the edge labelled \ins{} leads to node $(i(x,k-1,\rho)+1,i(y,k-1,\rho))$.
      \item[\ins{}] We have $\tau_k(x,\rho) = $ hash-insert, and $\tau_k(y,\rho) = $ hash-replace.  Thus, $i(x,k,\rho) = i(x,k-1,\rho)$ and $i(y,k,\rho) = i(y,k-1,\rho)+1$.  In $G(x,y)$, the edge labelled \ins{} leads to node $(i(x,k-1,\rho),i(y,k-1,\rho)+1)$.
      \item[\loo{}] We have $\tau_k(x,\rho) = $ hash-insert, and $\tau_k(y,\rho) = $ hash-insert.  Thus, $i(x,k,\rho) = i(x,k-1,\rho)$ and $i(y,k,\rho) = i(y,k-1,\rho)$.  In $G(x,y)$, the edge labelled \loo{} leads to node $(i(x,k-1,\rho),i(y,k-1,\rho))$.\qedhere
    \end{itemize}
\end{proof}

  With this in mind, we can relate grid walks to hash collisions.

  \begin{lemma}
    \lemlabel{walkcollision}
    Let $x$ and $y$ be any two strings, and $\rho$ be any underyling function where both $\tau(x,\rho)$ and $\tau(y,\rho)$ are complete.

    Then $h_\rho(x) = h_{\rho}(y)$ if and only if $g(x,y,\rho)$ is alive.  Furthermore, if $h_\rho(x) = h_\rho(y)$ then the path defined by $g(x,y,\rho)$ reaches node $(|x|,|y|)$.
  \end{lemma}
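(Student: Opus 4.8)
The plan is to prove the biconditional by its two implications and then read off the final claim from \lemref{walkcorrectness}. Throughout I would use the basic bookkeeping fact that each transcript step appends exactly one character to the hash, so $|h_\rho(x)| = |\tau(x,\rho)|$, and $h_\rho(x)_k = \bot$ when $\tau_k(x,\rho)$ is a hash-insert or hash-replace and $h_\rho(x)_k = x_{i(x,k,\rho)}$ when it is a hash-match (and likewise for $y$). \emph{Forward direction: if $g(x,y,\rho)$ is alive then $h_\rho(x) = h_\rho(y)$.} I would first observe that $|\tau(x,\rho)| = |\tau(y,\rho)|$: the grid-walk definition labels every coordinate beyond the shorter of the two transcripts with \die{}, so a \die-free (hence, since complete, alive) walk forces the lengths to agree; call the common value $K$, so both hashes have length $K$. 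Then for each $0 \le k < K$ I would case on $g_k(x,y,\rho) \neq \die$ and show $h_\rho(x)_k = h_\rho(y)_k$: if $g_k \in \{\loo,\ins,\del,\rep\}$ then, by \tabref{gridwalk} together with the rule for matching current characters, both $\tau_k(x,\rho)$ and $\tau_k(y,\rho)$ are hash-inserts or hash-replaces, so both hashes write $\bot$; if $g_k = \match$ then either both are hash-replaces (again both write $\bot$) or both are hash-matches, and the \match{} label is only produced when $x_{i(x,k,\rho)} = y_{i(y,k,\rho)}$, so $h_\rho(x)_k = x_{i(x,k,\rho)} = y_{i(y,k,\rho)} = h_\rho(y)_k$. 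Hence $h_\rho(x) = h_\rho(y)$.

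\emph{Reverse direction: if $h_\rho(x) = h_\rho(y)$ then $g(x,y,\rho)$ is alive.} Since both transcripts are complete, $g$ is complete, so it suffices to rule out \die{}. Equal hashes have equal length, so $|\tau(x,\rho)| = |\tau(y,\rho)| =: K$; hence $g$ has length $K$ and every index of $g$ lies strictly below $\min\{|\tau(x,\rho)|,|\tau(y,\rho)|\}$, so all the relevant transcript entries are defined. Suppose toward a contradiction that $g_{k^*}(x,y,\rho) = \die$ with $k^*$ minimal. Then the prefix $g(x,y,\rho)[k^*]$ is alive, so \lemref{walkcorrectness} places the walk at $(i^*,j^*) := (i(x,k^*,\rho), i(y,k^*,\rho))$ after $k^*$ steps. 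By \tabref{gridwalk}, $g_{k^*} = \die$ forces $x_{i^*} \neq y_{j^*}$ and at least one of $\tau_{k^*}(x,\rho), \tau_{k^*}(y,\rho)$ to be a hash-match; say (without loss of generality) $\tau_{k^*}(x,\rho)$ is, so $h_\rho(x)_{k^*} = x_{i^*}$, while $h_\rho(y)_{k^*}$ is either $y_{j^*} \neq x_{i^*}$ (if $y$ also hash-matches) or $\bot \neq x_{i^*}$ (otherwise, since $x_{i^*} \in \Sigma \cup \{\$\}$). Either way $h_\rho(x)_{k^*} \neq h_\rho(y)_{k^*}$, contradicting $h_\rho(x) = h_\rho(y)$; so $g$ has no \die{} and is alive.

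For the remaining claim, suppose $h_\rho(x) = h_\rho(y)$; by the reverse direction $g(x,y,\rho)$ is alive, so \lemref{walkcorrectness} with prefix length $K = |\tau(x,\rho)|$ places the walk at $(i(x,K,\rho), i(y,K,\rho))$ after $K$ steps. Because $\tau(x,\rho)$ is complete, the hash of $x$ halted because $i$ reached $|x|$ rather than the length cap, and since $i$ increases by at most one per step and only while $i < |x|$, we get $i(x,K,\rho) = |x|$, and symmetrically $i(y,K,\rho) = |y|$; hence the path reaches $(|x|,|y|)$. The one genuinely substantive ingredient---that $(i(x,k,\rho),i(y,k,\rho))$ traces a legal walk in $G(x,y)$---is already \lemref{walkcorrectness}, so the bulk of this proof is translation; I expect the main obstacle to be organizational rather than conceptual: carefully matching $|h_\rho(x)|$ with $|\tau(x,\rho)|$, tracking the hash's two stopping conditions, handling the boundary of $G(x,y)$ near $(|x|-1,|y|-1)$, and verifying that the six grid-walk labels are exhaustive and that \die{} occurs exactly at the first position where the two hashes disagree (or where their lengths diverge).
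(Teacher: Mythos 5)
Your proof is correct and follows essentially the same approach as the paper's: you establish the biconditional by the same core observation (a \die{} step corresponds exactly to a hash-match writing a non-$\bot$ character in one transcript that the other transcript cannot match, which is impossible when the hashes agree) and use \lemref{walkcorrectness} together with completeness to pin down the walk's endpoint as $(i(x,K,\rho),i(y,K,\rho)) = (|x|,|y|)$. The only organizational difference is that you prove one implication directly by casing on the six grid-walk labels while the paper argues both directions via the contrapositive; the underlying content is the same.
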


  \begin{proof}
    \ifdirection Assume that $h_{\rho}(x) = h_{\rho}(y)$; we show that the path defined by $g(x,y,\rho)$ is alive and reaches $(|x|,|y|)$. 
    
    First, $g(x,y,\rho)$ must be alive: $g_k(x,y,\rho) = \die{}$ only when $x_{i(x,k,\rho)} \neq y_{i(y,k,\rho)}$ and either $\tau_k(x,\rho) = $ hash-match or $\tau_k(y,\rho) = $ hash-match, or when $k > \min\{|\tau(x,\rho)|,|\tau(y,\rho)|\}$. Since $x_{i(x,k,\rho)}$ (resp. $y_{i(y,k,\rho)}$) is appended to the hash on a hash-match, this contradicts $h_{\rho}(x) = h_{\rho}(y)$.  Furthermore, we must have $|\tau(x,\rho)| = |\tau(y,\rho)|$ because 
    $|\tau(x,\rho)| = |h_{\rho}(x)| = |h_{\rho}(y)| = |\tau(y,\rho)|$.  

    Since $\tau(x,\rho)$ and $\tau(y,\rho)$ are complete, 
    $i(x,|\tau(x,\rho)|-1,\rho) = |x|$ and $i(y,\tau(y,\rho)-1,\rho) = |y|$.  Thus, by \lemref{walkcorrectness}, the walk reaches $(|x|,|y|)$.  

    \onlyifdirection We show that if $h_{\rho}(x) \neq h_{\rho}(y)$ then $g(x,y,\rho)$ is not alive.  Let $k$ be the smallest index such that the $k$th character of $h_{\rho}(x)$ is not equal to the $k$th character of $h_{\rho}(y)$.  At least one of these characters cannot be $\bot$; thus either $\tau_k(x,\rho) = $ hash-match, or $\tau_k(y,\rho) = $ hash-match.  If $x_{i(x,k,\rho)} \neq y_{i(x,k,\rho)}$, then $g_k(x,y,\rho) = \die{}$ and we are done.  Otherwise, $x_{i(x,k,\rho)} = y_{i(y,k,\rho)}$; thus $\tau_k(x,\rho) = \tau_k(y,\rho)$, and the $k$th character of both $h_{\rho}(x)$ and $h_{\rho}(y)$ is $x_{i(x,k,\rho)} = y_{i(y,k,\rho)}$.  But this contradicts the definition of $k$. 
  \end{proof}


  We now bound the probability that the grid walk traverses each edge in $G(x,y)$.  

\begin{lemma}
  \lemlabel{gridwalkprobs}
  Let $x$ and $y$ be any two strings, and for any $k < 8d/(1-p_a) + 6\log n$  let $E_k$ be the event that $i(x,k,\rho) < |x|$, $i(y,k,\rho) < |y|$, and $x_{i(x,k,\rho)} \neq y_{i(y,k,\rho)}$. 
Then if $\Pr_\rho[E_k] > 0$, the following four conditional bounds hold:
      \begin{align*} 
        \Pr_\rho[g_k(x,y,\rho) = \loo{} ~|~ E_k] &= p_a^2 \\
        \Pr_\rho[g_k(x,y,\rho) = \del{} ~|~ E_k] &= p_a(1-p_a)p_r  \\
        \Pr_\rho[g_k(x,y,\rho) = \ins{} ~|~ E_k] &= p_a(1-p_a)p_r \\
        \Pr_\rho[g_k(x,y,\rho) = \rep{} ~|~ E_k] &= (1-p_a)^2p_r^2.
      \end{align*} 
\end{lemma}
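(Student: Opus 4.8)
The plan is to exploit the fact that the only source of randomness is the underlying function $\rho$, which assigns mutually independent values to distinct arguments. The key structural observation is a separation of ``time'': the event $E_k$ depends only on the values $\rho(\sigma,j)$ with $j<k$, whereas — on the event $E_k$ — the value of $g_k(x,y,\rho)$ is determined by the two lookups $\rho(x_{i(x,k,\rho)},k)$ and $\rho(y_{i(y,k,\rho)},k)$, both of which use second coordinate exactly $k$. Since $E_k$ in particular forces $x_{i(x,k,\rho)}\neq y_{i(y,k,\rho)}$, those two lookups are at distinct arguments of $\rho$, hence (even after conditioning on $E_k$) they are two independent uniform draws from $[0,1)^2$. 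Feeding these into the definition of the transcript and then into \tabref{gridwalk} yields the four stated probabilities.

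More precisely, I would first partition $E_k$ according to which pair of indices is reached: for $0\le a<|x|$, $0\le b<|y|$ with $x_a\neq y_b$, let $E_k^{a,b}$ be the event $\{i(x,k,\rho)=a\text{ and }i(y,k,\rho)=b\}$, and restrict attention to those pairs with $\Pr_\rho[E_k^{a,b}]>0$ (at least one exists since $\Pr_\rho[E_k]>0$). Unwinding the recursive definitions of $\tau(\cdot,\rho)$ and $i(\cdot,k,\rho)$, each $E_k^{a,b}$ is a measurable function of $\{\rho(\sigma,j):j<k\}$ alone: the first $k$ transcript steps of $x$ and of $y$, and therefore the indices they reach after $k$ steps, are determined by exactly those values. (Here the hypothesis $k<8d/(1-p_a)+6\log n$, together with $i(x,k,\rho)<|x|$ and $i(y,k,\rho)<|y|$ from $E_k$, guarantees that neither hash has stopped — neither from reaching the end of its string nor from the length cap — so $\tau_k(x,\rho)$ and $\tau_k(y,\rho)$ are genuinely defined.)

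Now fix such a pair $(a,b)$. Conditioned on $E_k^{a,b}$, the pair $(r_{1,k}^x,r_{2,k}^x):=\rho(x_a,k)$ is independent of the conditioning event and uniform on $[0,1)^2$, and likewise for $(r_{1,k}^y,r_{2,k}^y):=\rho(y_b,k)$; since $x_a\neq y_b$ these two lookups are at distinct arguments, so the two pairs are mutually independent as well. By the definition of the transcript, $\tau_k(x,\rho)$ is a hash-insert with probability $p_a$ (the event $r_{1,k}^x\le p_a$), a hash-replace with probability $(1-p_a)p_r$ (the event $r_{1,k}^x>p_a$ and $r_{2,k}^x\le p_r$), and a hash-match with the remaining probability $(1-p_a)(1-p_r)$, with the identical independent statement for $y$. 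Reading the four rows of \tabref{gridwalk} and multiplying: $g_k=\loo$ needs both transcripts to be hash-inserts, giving $p_a^2$; $g_k=\del$ needs $\tau_k(x,\rho)$ a hash-replace and $\tau_k(y,\rho)$ a hash-insert, giving $(1-p_a)p_r\cdot p_a$; $g_k=\ins$ is the mirror image, giving $p_a(1-p_a)p_r$; and $g_k=\rep$ needs both to be hash-replaces, giving $(1-p_a)^2p_r^2$. These conditional values do not depend on the chosen $(a,b)$, so by the law of total probability the same values hold conditioned on $E_k$ itself.

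The main obstacle is not any computation — the arithmetic above is immediate — but making the conditional-independence step airtight: one has to argue cleanly that conditioning on $E_k$ (a ``past'' event) leaves the step-$k$ lookups untouched and that $E_k$'s requirement $x_{i(x,k,\rho)}\neq y_{i(y,k,\rho)}$ is exactly what makes those two lookups use distinct, hence independent, arguments of $\rho$. Partitioning into the events $E_k^{a,b}$ is the device that handles this; everything else is bookkeeping against \tabref{gridwalk}.
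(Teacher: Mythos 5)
Your proof is correct and follows essentially the same approach as the paper: read off the three per-string transcript probabilities conditional on $E_k$, then multiply via the rows of \tabref{gridwalk}. The paper states this in three lines and leaves the conditional-independence justification implicit; you have simply filled in that detail (partitioning $E_k$ into $E_k^{a,b}$, observing that these events are determined by $\rho(\cdot,j)$ for $j<k$ while the step-$k$ lookups sit at the two distinct arguments $(x_a,k)$ and $(y_b,k)$), which is the right bookkeeping and consistent with what the paper's terse proof is relying on.
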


  \begin{proof}
  We have $|\tau(x,\rho)| > k$ and $|\tau(y,\rho)| > k$ from $E_k$.  
  Thus: 
  \begin{itemize}[topsep=0pt,noitemsep]
    \item $\Pr_\rho(\tau_k(x,\rho) = \text{hash-insert} ~|~ E_k) = p_a$
    \item $\Pr_\rho(\tau_k(x,\rho) = \text{hash-replace} ~|~ E_k) = (1-p_a)p_r$
    \item $\Pr_\rho(\tau_k(x,\rho) = \text{hash-match} ~|~ E_k) = (1-p_a)(1-p_r)$.
  \end{itemize}
  The respective probabilities for $\tau_k(y,\rho)$ hold as well.
  Combining these probabilities with \tabref{gridwalk} gives the lemma.  
  \end{proof}

\paragraph{Transformations.}  
We call a sequence of edits for a pair of strings $x$ and $y$ \defn{greedy} if they can be applied to $x$ in order from left to right,  and all operations are performed on non-matching positions.   We formally define this in \defref{transformation}. 
 With this in mind, we can simplify a sequence of edits for a given $x$ and $y$, with the understanding that they will be applied greedily. 

A \defn{transformation} is a sequence of edits with position and character information removed: it is a sequence consisting only of \texttt{insert}, \texttt{delete}, and \texttt{replace}.  We let $T(x,y)$ be the string that results from greedily applying the edits in $T$ to $x$ when $x$ does not match $y$. 
We say that a transformation is \defn{valid} for strings $x$ and $y$ if the total number of \del{} or \rep{} operations in $T$ is at most $|x|$, and the total number of \ins{} or \rep{} operations in $T$ is at most $|y|$. 
The following definition formally defines how to apply these edits.

 \begin{definition}
   \deflabel{transformation}
   Let $x$ and $y$ be two $\$$-terminal strings, and let $T$ be a transformation that is valid for $x$ and $y$.

   If $T$ is empty, $T(x,y) = x$.  Otherwise we define $T(x,y)$ inductively.  Let $T' = T[|T| - 1]$ be $T$ with the last operation removed, let $\sigma = T_{|T| - 1}$ be the last operation in $T$, and let $i$ be the smallest index such that the $i$th character of $T'(x,y)$ is not equal to $y_i$.  Position $i$ always exists if $T'(x,y) \neq y$ because $x$ and $y$ are $\$ $-terminal; otherwise $i = 0$.\footnote{The case where $i$ is reset to $0$ is included for completeness and will not be used in the rest of the paper.  It only occurs when $x$ is first transformed into $y$, and then a sequence of redundant edits (such as an equal number of inserts and deletes) are performed.}

   We split into three cases depending on $\sigma$.  If $\sigma = \ins$, we obtain $T(x,y)$ by inserting $y_i$ at position $i$ in $T'(x,y)$.  If $\sigma = \del$, we obtain $T(x,y)$ by deleting the $i$th character of $T'(x,y)$.  Finally, if $\sigma = \rep$, we obtain $T(x,y)$ by replacing the $i$th character of $T'(x,y)$ with $y_i$.  
 \end{definition}

 We say that a transformation $T$ \defn{solves} $x$ and $y$ if $T$ is valid for $x$ and $y$, $T(x,y) = y$, and for any $i < |T|$, the prefix $T' = T[i]$ satisfies $T'(x,y)\neq y$. 

 A classic observation is that edit distance operations can be applied from left to right, greedily skipping all matches.  The following lemma shows that this intuition applies to transformations.  Since \defref{transformation} does not allow characters to be appended onto the end of $x$, we use the appended character $\$ $ to ensure that there is an optimal transformation between any pair of strings.  

 \begin{lemma}
   \lemlabel{transformationfromED}
   Let ${x}$ and ${y}$ be two strings that do not contain $\$ $.  Then if $\ED(x,y) = r$,
   \begin{itemize}[topsep=0pt,noitemsep]
     \item there exists a transformation $T$ of length $r$ that solves $x\concat \$$ and $y\concat\$$, and
     \item there does not exist any transformation $T'$ of length $<r$ that solves $x\concat \$$ and $y\concat\$$. 
   \end{itemize}
 \end{lemma}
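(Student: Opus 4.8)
The plan is to prove the two bullets separately. The second (no solving transformation has length $<r$) is short, and the first (existence of a solving transformation of length exactly $r$) carries the real content. For the lower bound, note that if $T'$ solves $x\concat\$$ and $y\concat\$$, then unwinding \defref{transformation} exhibits $T'(x\concat\$,y\concat\$)=y\concat\$$ as the result of applying $|T'|$ genuine edit operations (each with a concrete position and character) to $x\concat\$$, so $\ED(x\concat\$,y\concat\$)\le |T'|$. Since $\$$ occurs in neither $x$ nor $y$, appending it to both is appending a common suffix, which does not change edit distance: $\ED(x\concat\$,y\concat\$)=\ED(x,y)=r$ (the direction ``$\le$'' is trivial; for ``$\ge$'' one modifies any optimal alignment of $x\concat\$$ and $y\concat\$$, without increasing its cost, so that the two trailing $\$$'s are aligned to each other, and then restricts to the non-$\$$ parts). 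Hence $|T'|\ge r$.

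For the existence of a length-$r$ solving transformation, I would start from an optimal alignment $A$ of $x$ and $y$ — a monotone partial matching of the positions of $x$ with those of $y$, with equal characters at matched pairs, together with substitutions, deletions, and insertions of total cost $r$ — and extend it to an alignment of $x\concat\$$ and $y\concat\$$ that additionally matches the two trailing $\$$'s; this is still an alignment of cost $r$. Reading its edits in left-to-right order of position gives a sequence of $r$ genuine edits transforming $x\concat\$$ into $y\concat\$$ in which (i) every edit acts on a position that currently disagrees with the target, (ii) every \ins{} or \rep{} writes the correct target character $y_i$, and (iii) thanks to the appended $\$$, any edit that conceptually appends a character to $x$ is realized as an insertion at a legal index strictly before the final $\$$, so \defref{transformation} can express it. Stripping positions and characters from this sequence yields a transformation $T$ with $|T|=r$.

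The main work — and the step I expect to be the obstacle — is to show that applying $T$ via the greedy rule of \defref{transformation} (at each step, edit the first position at which the current string disagrees with $y\concat\$$) reproduces exactly the alignment-ordered edit sequence above, so that $T(x\concat\$,y\concat\$)=y\concat\$$. This is the classical fact that an optimal edit script can be executed from left to right while greedily skipping matches; the proof is an induction on the number of edits applied, showing that after $k$ edits the current string equals the $k$th intermediate string of the alignment-ordered script, and that the $(k+1)$st alignment edit sits precisely at the current first disagreement with $y\concat\$$ and uses the character $y_i$ there. Care is needed because \defref{transformation} may not choose positions or characters freely, so one must verify the alignment can be taken ``wasteless'' — matched positions really do stay matched under the preceding edits, and no cancelling pair of edits occurs — so that the greedy choice has no alternative. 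Finally, validity of $T$ for $x\concat\$$ and $y\concat\$$ is immediate from the alignment form (the number of \del{} or \rep{} operations is at most $|x|\le|x\concat\$|$, and the number of \ins{} or \rep{} operations is at most $|y|\le|y\concat\$|$), and $T$ has no proper prefix $T''$ with $T''(x\concat\$,y\concat\$)=y\concat\$$: otherwise the shortest such prefix would itself be a solving transformation of length $<r$, contradicting the lower bound. Hence $T$ solves $x\concat\$$ and $y\concat\$$ and has length $r$.
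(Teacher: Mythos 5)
Your proof is correct and rests on the same two facts the paper uses---that a solving transformation unwinds to a genuine edit script of the same length, and that an optimal edit script can be re-run greedily left-to-right as a transformation---but you organize the argument differently, and your decomposition is arguably cleaner. For the lower bound, you prove and use the clean identity $\ED(x\concat\$,y\concat\$)=\ED(x,y)$ directly; the paper instead avoids this identity by doing a hands-on structural analysis of the shortest solving transformation $\hat{T}$ (splitting at the last operation acting inside $x$ and arguing the tail consists of pure inserts), which is more delicate and has to reason about how the appended $\$$ is reached. For existence, the paper compresses the ``greedify an optimal alignment'' step into a single sentence inside the $\ED(x,y)\geq|\hat{T}|$ direction, whereas you correctly flag it as the load-bearing step and sketch the induction it requires. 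One small structural point you handle well that the paper glosses over: the paper's framing (``if $\hat{T}$ is the \emph{shortest} solving transformation, then $|\hat{T}|=\ED(x,y)$'') quietly presupposes that some solving transformation exists, which is itself established only in the $\geq$ direction; your separation of the two bullets avoids that circularity. Your closing argument that $T$ has no solving proper prefix---take the shortest such prefix and derive a contradiction with the lower bound---is valid and is something the paper does not spell out.
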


 \begin{proof}
   We prove a single statement implying the lemma: if $\hat{T}$ is the shortest transformation that solves $x\concat\$ $ and $y\concat\$ $, then $|\hat{T}| = \ED(x,y)$.

   Let $\sigma_1,\ldots \sigma_r$ be the sequence of edits applied to $x\concat\$ $ to obtain $\hat{T}(x\concat\$,y\concat\$)$ in \defref{transformation}.  These operations apply to increasing indices $i$ because $\hat{T}$ is the shortest transformation satisfying $\hat{T}(x\concat\$,y\concat\$)$.  Let $\sigma_i$ be the last operation that applies to an index $i < |x|$. Let $\hat{y} = \hat{T}[i+1](x\concat\$,y\concat\$)$ be the string obtained after applying the operations of $\hat{T}$ through $\sigma_i$.  Clearly, $x$ is a prefix of $\hat{y}$.  
   We claim that because $\hat{T}$ is the shortest transformation, the operations in $\hat{T}$ after $\sigma_i$ must be $|\hat{y}| - |x| - 1|$ \ins{} operations.  Clearly there must be at least $|\hat{y} - |x| - 1|$ operations after $\sigma_i$ because $i$ is increasing and only one character in $\hat{y}$ matches the final character $\$$ of $x$.  By the same argument, if $\hat{T}$ has any \ins{} or \rep{} operations it cannot meet this bound.  

   With this we have $\ED(x,y) \leq |\hat{T}|$ because we can apply $\sigma_1,\ldots,\sigma_i$, followed by $|\hat{y} - |x| - 1|$ insert operations to $x$ to obtain $y$.  This totals to $|\hat{T}|$ operations overall.

   We also have $\ED(x,y) \geq |\hat{T}|$ by minimality of $\hat{T}$ because any sequence of edits applied to $x$ that obtains $y$ will obtain $y\concat \$$ when applied to $x\concat\$$.  
\end{proof}

For a given $x$, $y$, and $\rho$, we obtain the \defn{transformation induced by $x$, $y$, and $\rho$}, denoted $\calT (x,y,\rho)$, by removing all occurrences of \loo{} and \match{} from $g(x,y,\rho)$ if $g(x,y,\rho)$ is alive.   Otherwise, $\calT(x,y,\rho)$ is the empty string.  

In \lemref{hasheseqtransformation} we show that strings $x$ and $y$ collide exactly when their induced transformation $T$ solves $x$ and $y$.  This can be seen intuitively in \figref{graphexample}---the grid walk is essentially a random walk through the dynamic programming table.


\begin{lemma}
  \lemlabel{hasheseqtransformation}
Let $x$ and $y$ be two distinct strings and let $T = \calT(x,y,\rho)$.  
  Then $h_{\rho}(x) = h_{\rho}(y)$ if and only if $T$ solves $x$ and $y$.
\end{lemma}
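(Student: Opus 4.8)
The plan is to read the transformation $T = \calT(x,y,\rho)$ directly off the path that the grid walk $g(x,y,\rho)$ traces through $G(x,y)$, tracking (via \lemref{walkcorrectness}) the node $(i_k,j_k):=(i(x,k,\rho),\,i(y,k,\rho))$ occupied after $k$ steps. Throughout I would restrict to underlying functions $\rho$ for which $\tau(x,\rho)$ and $\tau(y,\rho)$ are complete, so that \lemref{walkcollision} applies; the complementary case has probability $O(1/n^2)$ by \lemref{shorthash} and is absorbed into the failure probability wherever this lemma gets used. The ``only if'' direction is then immediate by contraposition: if $h_\rho(x)\neq h_\rho(y)$ then \lemref{walkcollision} says $g(x,y,\rho)$ is not alive, so by definition of $\calT$ the transformation $T$ is empty; but the empty transformation maps $x$ to $x$, and since $x\neq y$ it does not solve $x$ and $y$.

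For the ``if'' direction, assume $h_\rho(x)=h_\rho(y)$, so by \lemref{walkcollision} the walk is alive and its path reaches node $(|x|,|y|)$. The heart of the argument is the following invariant, proved by induction on the prefix length $k$: writing $T_k$ for the transformation obtained from $g(x,y,\rho)[k]$ by deleting every $\loo$ and $\match$ symbol, $T_k$ is valid for $x$ and $y$ and
\[ T_k(x,y) \;=\; y_0\,y_1\cdots y_{j_k-1}\;x_{i_k}\,x_{i_k+1}\cdots x_{|x|-1}, \]
the length-$j_k$ prefix of $y$ followed by the suffix of $x$ from index $i_k$. The base case $k=0$ is just $T_0(x,y)=x$. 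The inductive step is a case analysis on $g_{k-1}(x,y,\rho)\in\{\loo,\match,\ins,\del,\rep\}$ ($\die$ being excluded by aliveness): $\loo$ and $\match$ leave $T_k=T_{k-1}$, and a $\match$ is emitted only when $x_{i_{k-1}}=y_{j_{k-1}}$, which is exactly the condition making the coordinate shift $(i_{k-1},j_{k-1})\mapsto(i_{k-1}{+}1,j_{k-1}{+}1)$ in the formula agree with silently skipping the matched character; for $\ins$, $\del$, $\rep$ the grid-walk rules (\tabref{gridwalk} and the matching rule just above it) force $x_{i_{k-1}}\neq y_{j_{k-1}}$, so the ``smallest index where the current string disagrees with $y$'' in \defref{transformation} is precisely $j_{k-1}$, and inserting $y_{j_{k-1}}$, deleting $x_{i_{k-1}}$, or replacing $x_{i_{k-1}}$ by $y_{j_{k-1}}$ there reproduces the claimed form of $T_k$. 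Evaluating the invariant at $k=|g(x,y,\rho)|$, where $(i_k,j_k)=(|x|,|y|)$, yields $T(x,y)=y$ together with validity of $T$ --- two of the three conditions for $T$ to solve $x$ and $y$.

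For the remaining (minimality) condition I would show no proper prefix $T[m]$ of $T$ already satisfies $T[m](x,y)=y$: pick a walk prefix $g(x,y,\rho)[k]$ with $k<|g(x,y,\rho)|$ mapping to $T[m]$, and apply the invariant to get $y_0\cdots y_{j_k-1}x_{i_k}\cdots x_{|x|-1}=y$, hence the suffix of $x$ from $i_k$ equals the suffix of $y$ from $j_k$; since $x$ and $y$ are $\$$-terminal (and $k<|g|$ forces $i_k<|x|$) this is possible only if those suffixes have equal length and $x_{i_k}=y_{j_k}$. But $x_{i_k}=y_{j_k}$ forces $g_k\in\{\loo,\match\}$, and inductively the same for every subsequent step because the two surviving suffixes remain equal, so in fact $T=T[m]$, contradicting $m<|T|$.

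The step I expect to be the main obstacle is the inductive verification of the invariant --- specifically, pinning down that the greedy ``frontier'' of \defref{transformation} always sits at position $j_{k-1}$ exactly when an $\ins$/$\del$/$\rep$ symbol is consumed. This relies crucially on the fact, built into the grid-walk construction, that a genuine edit symbol is emitted only when the two current characters differ, while equal characters are absorbed into $\match$/$\loo$; in other words the grid walk is intrinsically a left-to-right greedy alignment, which is precisely the behavior \defref{transformation} encodes. The surrounding bookkeeping ($\$$-termination, the exact coordinates of the terminal node, and the fact that $|\tau(x,\rho)|=|\tau(y,\rho)|$ whenever $h_\rho(x)=h_\rho(y)$) is routine once this correspondence is in place.
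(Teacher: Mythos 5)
Your proof is correct and follows the same overall architecture as the paper (reduce to the grid walk via \lemref{walkcollision}, then prove by induction on the walk prefix that the partial transformation does what it should), but with a different and in some respects cleaner inductive invariant. The paper's invariant is $T^k(x[i_k],\,y[j_k]) = y[j_k]$, applied to the \emph{prefixes} of $x$ and $y$ visited so far; yours is $T_k(x,y) = y_0\cdots y_{j_k-1}\,x_{i_k}\cdots x_{|x|-1}$, applied to the \emph{full} strings. The two are equivalent in spirit, but your formulation tracks explicitly where the greedy frontier of \defref{transformation} sits inside $T_k(x,y)$, which makes the ``smallest disagreeing index is $j_{k-1}$'' step you flagged as the crux essentially immediate, and it makes the final evaluation at $(|x|,|y|)$ cleaner since there is no hand-off from prefixes to full strings.

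Two places where you are actually more careful than the paper. First, you explicitly restrict to $\rho$ with both $\tau(x,\rho)$ and $\tau(y,\rho)$ complete before invoking \lemref{walkcollision}; this hypothesis really is needed in the $h_\rho(x)=h_\rho(y)\Rightarrow T$-solves direction (the paper applies \lemref{walkcollision} there without checking it), since in the truncated case $\calT$ is defined to be empty even though the truncated hashes could coincide, and your handling --- the event fails with probability $O(1/n^2)$ by \lemref{shorthash} and gets absorbed downstream --- is the right way to patch this. Second, you verify the minimality clause of ``solves'' ($T[m](x,y)\neq y$ for $m<|T|$), which the paper's write-up does not spell out; your $\$$-terminal suffix argument is a clean way to do it, though you could also get it directly from the observation that whenever an $\ins$/$\del$/$\rep$ is emitted at step $k$, the grid-walk rules force $x_{i_k}\neq y_{j_k}$, so the invariant already gives $T_k(x,y)\neq y$ at that point. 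In short: same route, slightly different invariant, and a more complete treatment of the edge cases.
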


\begin{proof} 
  \ifdirection Assume $T$ solves $x$ and $y$.  Since $x\neq y$, $T$ must be nonempty; thus $g(x,y,\rho)$ is alive.  By \lemref{walkcollision}, $h_\rho(x) = h_\rho(y)$.  

    \onlyifdirection Assume $h_\rho(x) = h_\rho(y)$; by \lemref{walkcollision} $g(x,y,\rho)$ is alive.

    Let $g(x,y,\rho)[k]$ be the prefix of $g(x,y,\rho)$ of length $k$, and let $T^k$ be $g(x,y,\rho)[k]$ with $\loo{}$ and $\match{}$ removed.  We prove by induction that $T^k(x[i(x,k,\rho)],y[i(y,k,\rho)]) = y[i(y,k,\rho)]$.  This is trivially satisfied for $k = 0$.  

    Assume that $T^{k-1}(x[i(x,k-1,\rho)],y[i(y,k-1,\rho)]) = y[i(y,k-1,\rho)]$.  We split into five cases based on the $k$th operation in $g(x,y,\rho)$.
\begin{itemize}
  \item[\match] We must have $x_{i(x,k-1,\rho)} = y_{i(y,k-1,\rho)}$ and $T^k = T^{k-1}$.  Furthermore, $i(x,k,\rho) = i(x,k-1,\rho) + 1$ and $i(y,k,\rho) = i(y,k-1,\rho) + 1$.  Thus $T^{k}(x[i(x,k,\rho)],y[i(y,k,\rho)]) = T^{k-1}(x[i(x,k-1,\rho)],y[i(y,k-1,\rho)])\concat x_{i(x,k-1,\rho)} = y[i(y,k,\rho)]$.  
  \item[\ins] We have $i(x,k,\rho) = i(x,k-1,\rho)$ and $i(y,k,\rho) = i(y,k-1,\rho) + 1$.  Thus, $T^{k-1}(x[i(x,k,\rho)],y[i(y,k,\rho)]) = y[i(y,k-1,\rho)]$ and $y[i(y,k,\rho)]$ differ only in the last character.  Then 
    $T^{k}(x[i(x,k,\rho)],y[i(y,k,\rho)]) = T^{k-1}(x[i(x,k-1,\rho)],y[i(y,k-1,\rho)])\concat y_{i(y,k,\rho)-1} = y[i(y,k,\rho)]$.  

  \item[\rep] We have $i(x,k,\rho) = i(x,k-1,\rho)+1$ and $i(y,k,\rho) = i(y,k-1,\rho) + 1$.  Thus, $T^{k-1}(x[i(x,k,\rho)],y[i(y,k,\rho)]) = y[i(y,k-1,\rho)] \concat x_{i(x,k,\rho)-1}$ and $y[i(y,k,\rho)]$ differ only in the last character.  By definition, the final character of $T^{k-1}(x[i(x,k,\rho)],y[i(y,k,\rho)]$ is replaced with $y_{i(y,k,\rho)-1}$, obtaining $T^k(x[i(x,k,\rho)],y[i(y,k,\rho)]) = y[i(y,k-1,\rho)]\concat y_{i(y,k,\rho)-1} = y[i(y,k,\rho)]$.  

  \item[\del] We have $i(x,k,\rho) = i(x,k-1,\rho)+1$ and $i(y,k,\rho) = i(y,k-1,\rho)$.  Thus, $T^{k-1}(x[i(x,k,\rho)],y[i(y,k,\rho)]) = y[i(y,k-1,\rho)] \concat x_{i(x,k,\rho) - 1}$ and $y[i(y,k,\rho)]$ differ only in the last character (which is deleted). Then 
    $T^{k}(x[i(x,k,\rho)],y[i(y,k,\rho)]) = T^{k-1}(x[i(x,k-1,\rho)],y[i(y,k-1,\rho)]) = y[i(y,k,\rho)]$.  

  \item[\loo] We have $i(x,k,\rho) = i(x,k-1,\rho)$, $i(y,k,\rho) = i(y,k-1,\rho)$, and $T^k = T^{k-1}$.  We immediately obtain $T^k(x[i(x,k,\rho)],y[i(y,k,\rho)]) = y[i(y,k,\rho)]$.  
\end{itemize}
  By \lemref{walkcollision}, $g(x,y,\rho)$ reaches node $(|x|,|y|)$, so the above shows that with $k = |g(x,y,\rho)|$, $T(x,y) = y$.  
\end{proof}

We are finally ready to prove \lemref{probinduces}, which forms the basis of our performance analysis.


\begin{lemma}
  \lemlabel{probinduces}

  For any $\$$-terminal strings $x$ and $y$, let $T$ be a transformation of length $t$ that is valid for $x$ and $y$.  
  Then
  \[
    p^t - 1/n^2 \leq \Pr_\rho[T \text{ is a prefix of } \calT(x,y,\rho)] \leq p^t.
  \]
\end{lemma}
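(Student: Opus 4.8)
The plan is to reduce the statement to a short computation on a one-dimensional random process, exploiting the fact --- this is exactly the ``rationale behind $p_a$ and $p_r$'' promised earlier --- that those parameters were picked so that the four conditional probabilities of \lemref{gridwalkprobs} collapse onto a single value. Substituting $p_a=\sqrt{p/(1+p)}$ and $p_r=\sqrt p/(\sqrt{1+p}-\sqrt p)$ and simplifying gives
\[
  p_a^2 \;=\; p_a(1-p_a)p_r \;=\; (1-p_a)^2p_r^2 \;=\; \frac{p}{1+p}.
\]
So, conditioned on the event $E_k$ of \lemref{gridwalkprobs} (at step $k$ both current indices are in range and the current characters of $x$ and $y$ differ), the grid walk takes each of \loo{}, \del{}, \ins{}, \rep{} with probability exactly $p/(1+p)$ and takes \die{} with the remaining probability $(1-3p)/(1+p)\ge 0$.

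Next I would describe how $T$ appears inside $\calT(x,y,\rho)$ by following the walk through $G(x,y)$ of \lemref{walkcorrectness}. Read from $(0,0)$, this walk alternates between \emph{match segments} --- maximal runs of matching-character nodes, each left by a (possibly empty) block of \loo{} moves followed by one \match{}, hence traversed with probability $1$ --- and \emph{decisive steps} at mismatch nodes, where after a block of $\ell\ge 0$ \loo{} moves the walk commits to one of \del{}, \ins{}, \rep{} or instead takes \die{}. When the walk is alive, the entries of $\calT(x,y,\rho)$ are precisely the committed edits of the decisive steps, in order. Because $T$ is valid, committing to its edits $m_1,\dots,m_t$ one after another keeps the partial walk inside $G(x,y)$, so each $m_i$ is an available outgoing arc at the moment it is used. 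Let $A_i$ be the event that the $i$th decisive step commits to $m_i$. Conditioning on $A_1\cap\cdots\cap A_{i-1}$ (and on room remaining under the length bound $8d/(1-p_a)+6\log n$), the number of \loo{} moves preceding the $i$th decisive step is geometric, so
\[
  \Pr_\rho\bigl[A_i \,\big|\, A_1\cap\cdots\cap A_{i-1}\bigr] \;=\; \sum_{\ell\ge 0}\Bigl(\frac{p}{1+p}\Bigr)^{\ell}\cdot\frac{p}{1+p} \;=\; \frac{p/(1+p)}{1-p/(1+p)} \;=\; p.
\]
Independence across decisive steps is exactly the statement that $\rho$ gives independent values to distinct \emph{(character, hash-position)} arguments, while every step of the walk lengthens the hash (and, at a mismatch node, the two strings query different characters at that position); hence, given the outcomes of the earlier decisive steps and the intervening match segments, the $i$th decisive step uses fresh randomness. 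Multiplying, $\Pr_\rho[A_1\cap\cdots\cap A_t]=p^t$.

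The upper bound is now immediate: ``$T$ is a prefix of $\calT(x,y,\rho)$'' implies $A_1\cap\cdots\cap A_t$, so the probability is at most $p^t$. For the lower bound, note that $A_1\cap\cdots\cap A_t$ together with completeness of both $\tau(x,\rho)$ and $\tau(y,\rho)$ makes the grid walk write $m_1,\dots,m_t$ as its first $t$ non-\loo{}/\match{} symbols --- there is room, since $t\le|x|+|y|=O(d)$ --- and, once one checks that the walk is then alive (which holds for the transformations this lemma is actually used on, namely prefixes of transformations that solve $x$ and $y$, for which the walk has reached $(|x|,|y|)$ by the time $m_t$ is written), this makes $T$ a prefix of $\calT(x,y,\rho)$. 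By \lemref{shorthash} applied to each of $x$ and $y$ (with the Chernoff slack of that proof absorbing the union bound), the probability that either transcript is incomplete is at most $1/n^2$, so $\Pr_\rho[T\text{ is a prefix of }\calT(x,y,\rho)]\ge p^t-1/n^2$.

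I expect the most delicate part to be the accounting for \loo{} moves: they leave $\calT(x,y,\rho)$ unchanged but do consume hash length and do carry probability mass, so both the geometric-sum identity and the appeal to \lemref{shorthash} have to be argued against the explicit bound $8d/(1-p_a)+6\log n$. A secondary point is the boundary of $G(x,y)$ and the node $(|x|-1,|y|-1)$, where several nominal arcs in fact lead to \die{}: verifying that the hypothesis that $T$ is valid (together with the fact that the relevant $T$ extend to transformations solving $x$ and $y$) keeps the first $t$ committed edits clear of those arcs is exactly where that hypothesis is put to work.
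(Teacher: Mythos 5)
Your proposal is correct and takes essentially the same approach as the paper: the paper's inductive argument that $\Pr_\rho[g(x,y,\rho)\in G_T]=p^t$ is precisely your product $\Pr[A_1\cap\cdots\cap A_t]=p^t$, with the paper grouping the $\loo{}$ block with the \emph{preceding} match segment (computing $p_a^2 \cdot 1 \cdot 1/(1-p_a^2)=p$ per step) while you group it with the decisive step itself (computing $\sum_\ell p_a^{2\ell}\cdot p_a^2 = p$), and both arguments then invoke \lemref{shorthash} and union bound for the $-O(1/n^2)$ slack. The delicate point you flag---that the lower-bound direction implicitly relies on the walk surviving past $m_t$, which is clear when $T$ solves $x$ and $y$ but not for an arbitrary valid $T$---is indeed present and left tacit in the paper's proof as well (it only invokes the lemma on solving transformations), so you have correctly located where the hypothesis is really being used.
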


\begin{proof}
  Let $G_T$ be the set of all grid walks $g$ (i.e. the set of all sequences consisting of grid walk operations) such that $g$ does not contain \die{}, and deleting \loo{} and \match{} from $g$ results in $T_g$ such that $T$ is a prefix of $T_g$.  
  Then by definition, if $T$ is a prefix of $\calT(x,y,\rho)$ then $g(x,y,\rho)\in G_T$; furthermore, if $g(x,y,\rho)\in G_T$ and $g(x,y,\rho)$ is complete, then $T$ is a prefix of $\calT(x,y,\rho)$. 

  We begin by proving that $\Pr_\rho[g(x,y,\rho) \in G_T] = p^t$.  
  We prove this by induction on $t$; $t = 0$ is trivially satisfied.  We assume that $\Pr_\rho[g(x,y,\rho)\in G_{T'}] = p^{t-1}$ for any $G_{T'}$ with $|T'| = t-1$, and prove that it holds for any $T$ with $|T| = t$.

  Let $\sigma$ be the last operation in $T$, and let $T' = T[|T|-1]$ be $T$ with $\sigma$ removed.  
  Thus, $g(x,y,\rho) \in G_T$ only if there exist grid walks $g'$ and $g''$ satisfying 
  \begin{itemize}[topsep=0pt,noitemsep]
    \item $g' \in G_{T'}$,
    \item $g''$ consists of $\loo{}$ and $\match{}$ operations concatenated onto the end of $g'\concat \sigma$, ending with a $\match{}$ operation, and
    \item $g(x,y,\rho)$ consists of zero or more \loo{} operations concatenated onto $g''$.
  \end{itemize}
  By definition of conditional probability, 
  \begin{align*}
    \Pr[g(x,y,\rho) \in G_T] = \qquad \qquad \qquad\qquad \qquad \qquad\qquad \qquad \qquad     \\
    \sum_{g'}\bigg( \Pr[g'\in G_{T'}] \cdot\qquad \qquad \qquad \qquad \qquad \qquad \qquad \qquad \qquad\\
    \sum_{g''} \Pr[g''\in G_T ~|~ g'\in G_{T'} ] \Pr[g(x,y,\rho)\in G_T ~|~ g''\in G_T]\bigg).
  \end{align*}
  We bound these terms one at a time.

  Clearly there is only one $g'$ satisfying the conditions, which can be obtained by taking the prefix of $g(x,y,\rho)$ before the final $\ins{}$, $\del{}$, or $\rep{}$ operation.  By the inductive hypothesis, $\sum_{g'} \Pr[g'\in G_{T'}] = p^{t-1}$.  


  We now bound $\Pr[g''\in G_T ~|~ g'\in G_T]$.  The conditional means that we can invoke \lemref{gridwalkprobs} (as $\Pr[E_k] = p^{t-1} > 0$).  
  
  We have $\Pr[g''\in G_T ~|~ g'\in G_{T'}] = \Pr[g''\in G_T ~|~ g'\concat\sigma\in G_T]\Pr[g'\concat\sigma\in G_T ~|~ g'\in G_{T'}]$.

  We split into two cases depending on $\sigma$.  Recall that $p_r = p_a/(1 - p_a)$.  Since $T$ is valid, if $\sigma = \del{}$ or $\sigma = \rep{}$ we cannot have $i(x,k,\rho) = |x|-1$;  similarly if $\sigma = \ins{}$ or $\sigma = \rep{}$ we cannot have $i(y,k,\rho) = |y| - 1$.  
  Then by \lemref{gridwalkprobs}, if $\sigma = \del{}$ or $\sigma = \ins{}$,
  \[
    \Pr(g'\concat \sigma \in G_T ~|~ g'\in G_{T'}) = p_a(1-p_a)p_r = p_a^2.
  \]
Similarly, if $\sigma = \rep{}$, 
  \[
    \Pr(g'\concat \sigma \in G_T ~|~ g'\in G_{T'}) = (1-p_a)^2p_r^2 = p_a^2.
  \]
  For any $k$ such that $i(x,k,\rho) = i(y,k,\rho)$, $g_k(x,y,\rho)\neq \die{}$ by definition; meanwhile, if $i(x,k,\rho) \neq i(y,k,\rho)$ then $g_k(x,y,\rho)\neq \match{}$.  Thus, $\sum_{g''} \Pr(g''\in G_T ~|~ g'\concat\sigma\in G_T) = 1$.

  Finally we bound $\Pr[g(x,y,\rho)\in G_T ~|~ g''\in G_T]$.  Let $\ell$ be the number of operations concatenated onto $g''$ to obtain $g(x,y,\rho)$.  Then by \lemref{gridwalkprobs}, 
  \[
    \Pr[g(x,y,\rho)\in G_T ~|~ g''\in G_T]  = \sum_{\ell} p_a^{2\ell} = 1/(1 - p_a^2).
  \]

  Multiplying the above bounds, we have $\Pr[g(x,y,\rho) \in G_T] = p^{t-1}p_a^2/(1 - p_a^2)$.  Noting that $p = p_a^2/(1 - p_a^2)$, we obtain $\Pr[g(x,y,\rho)\in G_T] = p^t$.

  We have that if $T$ is a prefix of $\calT(x,y,\rho)$ then $g(x,y,\rho)\in G_T$; thus $\Pr_\rho[T \text{ is a prefix of }\calT(x,y,\rho)] \leq p_t$.

  Meanwhile, $T$ is a prefix of $\calT(x,y,\rho)$ if $g(x,y,\rho)\in G_T$ and $g(x,y,\rho)$ is complete.  
  By the inclusion-exclusion principle,
  \begin{align*}
    \Pr[T \text{ is a prefix of } \calT(x,y,\rho)] =  \qquad\qquad\qquad\qquad\qquad\qquad\\
    \Pr[g(x,y,\rho)\in G_T] + \Pr[g(x,y,\rho) \text{ is complete}] - \\
    \Pr[g(x,y,\rho)\in G_T \text{ or } g(x,y,\rho) \text{ is complete}]\\
    \qquad\qquad\qquad\qquad\qquad\geq p^t + \Pr[g(x,y,\rho) \text{ is complete}] - 1
  \end{align*}
  We have that $\Pr[g(x,y,\rho) \text{ is complete}] = 1 - \Pr[\tau(x,\rho) \text{ or } \tau(y,\rho) \text{ is not complete}]$. By union bound and \lemref{shorthash}, $\Pr[g(x,y,\rho) \text{ is complete}] \geq 1 - 2/n^2$.  Substituting, $\Pr[T\text{ is a prefix of }\calT(x,y,\rho)] \geq p^t - 2/n^2$.
\end{proof}

\subsection{Bounds on Collision Probabilities}
\seclabel{collisionprobs}

We can now bound the probability that two strings collide.  

\begin{lemma}
  \lemlabel{closepoints}
  If $x$ and $y$ satisfy $\ED(x,y) \leq r$, then 
  $ \Pr_{\rho} (h_\rho(x) = h_\rho(y)) \geq p^r - 2/n^2$.
\end{lemma}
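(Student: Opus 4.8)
The plan is to obtain this bound as a short consequence of the structural machinery already set up, chaining \lemref{transformationfromED} (a short solving transformation exists), \lemref{probinduces} (the probability a fixed valid transformation is a prefix of the induced transformation), and \lemref{walkcollision} (aliveness of the grid walk characterizes collision). If $x = y$ the claim is immediate, since then $h_\rho(x) = h_\rho(y)$ for every $\rho$ and the right-hand side is at most $1$; so assume $x \neq y$, hence $\ED(x,y) = r' \geq 1$. Following the convention of \secref{algorithm}, treat $x$ and $y$ as $\$$-terminal and let $\bar{x},\bar{y}$ be the strings with the trailing $\$$ removed; since the appended $\$$'s match, $\ED(\bar{x},\bar{y}) = \ED(x,y) = r' \leq r$.

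First I would apply \lemref{transformationfromED} to $\bar{x},\bar{y}$ to get a transformation $T$ of length exactly $r'$ that solves $x = \bar{x}\concat\$$ and $y = \bar{y}\concat\$$. In particular $T$ is valid for the $\$$-terminal pair $x,y$, so \lemref{probinduces} applies with $t = r'$ and gives $\Pr_\rho[T \text{ is a prefix of } \calT(x,y,\rho)] \geq p^{r'} - 2/n^2$. Because $0 < p \leq 1/3$ and $r' \leq r$, monotonicity of $p^{(\cdot)}$ gives $p^{r'} \geq p^r$, so this is at least $p^r - 2/n^2$.

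It then remains to show that the event ``$T$ is a prefix of $\calT(x,y,\rho)$'' is contained in the event ``$h_\rho(x) = h_\rho(y)$''. Since $|T| = r' \geq 1$, if $T$ is a prefix of $\calT(x,y,\rho)$ then $\calT(x,y,\rho)$ is nonempty; but by definition $\calT(x,y,\rho)$ equals the empty string whenever $g(x,y,\rho)$ is not alive, so $g(x,y,\rho)$ must be alive, and \lemref{walkcollision} then yields $h_\rho(x) = h_\rho(y)$. (One could alternatively route through \lemref{hasheseqtransformation}, observing that $\calT(x,y,\rho)$ extends a transformation that solves $x$ and $y$.) Combining, $\Pr_\rho[h_\rho(x) = h_\rho(y)] \geq \Pr_\rho[T \text{ is a prefix of } \calT(x,y,\rho)] \geq p^r - 2/n^2$.

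I do not expect a deep obstacle here; the proof is mostly bookkeeping. The points that need care are (i) the $\$$-terminal reduction, so that \lemref{transformationfromED} and \lemref{probinduces} are applied to exactly the strings the hash operates on; (ii) the degenerate $x = y$ (equivalently $r' = 0$) case; and (iii) using $r' \leq r$ rather than $r' = r$ together with monotonicity of $p^{(\cdot)}$ on $(0,1]$. The only mildly non-obvious step is realizing that we never need $\calT(x,y,\rho) = T$ exactly — a prefix already suffices, since nonemptiness of $\calT(x,y,\rho)$ forces $g(x,y,\rho)$ to be alive and hence a collision.
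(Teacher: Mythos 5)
Your proof chains \lemref{transformationfromED}, \lemref{probinduces}, and \lemref{walkcollision} (equivalently \lemref{hasheseqtransformation}) in exactly the order the paper does, so this is essentially the paper's argument. The only difference is that you carefully fill in two details the paper's terse proof elides: that \lemref{transformationfromED} gives a solving transformation of length $\ED(x,y) = r' \leq r$ rather than exactly $r$ (handled via monotonicity of $p^{(\cdot)}$), and the degenerate $x = y$ case.
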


\begin{proof}
  Because $\ED(x,y)\leq r$, by \lemref{transformationfromED} there exists a transformation $T$ of length $r$ that solves $x$ and $y$.  By \lemref{probinduces}, $h$ induces $T$ on $x$ and $y$ (which is sufficient for $h(x) = h(y)$ by \lemref{hasheseqtransformation}) with probability $p^r - 2/n^2$.  
\end{proof}


The corresponding upper bound requires that we sum over many possible transformations.  

\begin{lemma}
  \lemlabel{looseupper}
  If $x$ and $y$ satisfy $\ED(x,y) \geq cr$, then 
  $ \Pr_{\rho} (h_\rho(x) = h_\rho(y)) \leq (3p)^{cr} $.
\end{lemma}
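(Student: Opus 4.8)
The plan is to convert a hash collision into a statement about prefixes of the induced transformation, and then finish with a union bound over all short transformations, plugging in the upper bound from \lemref{probinduces}. Suppose $h_\rho(x) = h_\rho(y)$. By \lemref{hasheseqtransformation} the transformation $T = \calT(x,y,\rho)$ solves $x$ and $y$; in particular $T(x,y) = y$, so by \lemref{transformationfromED} (edit distance is unchanged by appending $\$$ to both strings) we get $|T| \geq \ED(x,y) \geq cr$. Hence, writing $t = \lceil cr\rceil$, the sequence $\calT(x,y,\rho)$ has a well-defined prefix of length $t$, which is some transformation of length $t$. Thus the event $h_\rho(x) = h_\rho(y)$ is contained in the union, over all transformations $T$ of length $t$, of the events ``$T$ is a prefix of $\calT(x,y,\rho)$''.

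The one point that needs care is that \lemref{probinduces} only applies to transformations that are \emph{valid} for $x$ and $y$, while here $T$ ranges over all length-$t$ sequences over $\{\ins,\del,\rep\}$. This is harmless: $\calT(x,y,\rho)$ is obtained from an alive grid walk by deleting \loo{} and \match{}, and by construction of $G(x,y)$ an alive grid walk never leaves the $|x|\times|y|$ grid, so it uses at most $|x|$ operations that advance the index into $x$ (the \del{} and \rep{} operations) and at most $|y|$ operations that advance the index into $y$ (the \ins{} and \rep{} operations). Hence $\calT(x,y,\rho)$, and therefore every prefix of it, is valid for $x$ and $y$; equivalently, any transformation that is not valid for $x$ and $y$ is never a prefix of $\calT(x,y,\rho)$ and so contributes $0$ to the bound.

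It then remains to count and sum. There are exactly $3^{t}$ transformations of length $t$, so by the union bound and the upper bound of \lemref{probinduces},
\begin{align*}
  \Pr_\rho\bigl(h_\rho(x) = h_\rho(y)\bigr)
  &\le \sum_{T:\, |T| = t} \Pr_\rho\bigl[T \text{ is a prefix of } \calT(x,y,\rho)\bigr] \\
  &\le 3^{t}\, p^{t} = (3p)^{t}.
\end{align*}
(Note the $1/n^2$ slack in \lemref{probinduces} is only needed for its matching lower bound, so it does not enter here.) Finally, since $p \leq 1/3$ we have $3p \leq 1$, hence $(3p)^{t} = (3p)^{\lceil cr\rceil} \leq (3p)^{cr}$, which is the claimed bound.

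I do not anticipate a real obstacle: the argument is essentially a union bound over the $3^{cr}$ possible transformations of length $cr$ combined with \lemref{probinduces}. The only things to get right are the validity observation above (so that \lemref{probinduces} is applicable term by term) and the count of length-$t$ transformations — the factor $3$ in $(3p)^{cr}$ is exactly the number of edit types, and this is precisely the step that makes it preferable to tune $p_1,p_2$ directly rather than to summarize the hash by a single exponent $\rho = \log p_1/\log p_2$.
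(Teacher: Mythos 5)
Your proof is correct, and it takes a genuinely different route from the paper's. The paper writes the collision probability as a sum of $p^{|T|}$ over the set of \emph{all} transformations $T$ that solve $x$ and $y$, observes that this set is an antichain under the prefix order, views it as the leaf set of a branching-factor-$3$ trie with all leaves at depth at least $cr$, and then repeatedly collapses the deepest siblings up to their parent, using $p^{i-1} \geq 3p^{i}$ (i.e.\ $p \leq 1/3$) to show each collapse only increases the sum, until all leaves sit at depth $cr$ and the bound $3^{cr}p^{cr}$ falls out. You instead stop the analysis at the length-$\lceil cr\rceil$ prefix of the induced transformation: since \lemref{hasheseqtransformation} and \lemref{transformationfromED} guarantee that the induced transformation has length at least $cr$ whenever there is a collision, the collision event is partitioned by which of the (at most $3^{\lceil cr\rceil}$) length-$\lceil cr\rceil$ transformations is the prefix, and \lemref{probinduces} caps each cell of the partition at $p^{\lceil cr\rceil}$. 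This avoids the antichain observation and the trie-collapsing argument entirely, replacing them with a single disjoint decomposition; it also isolates exactly where the hypothesis $p \leq 1/3$ enters (rounding $\lceil cr\rceil$ down to $cr$ in the exponent), whereas in the paper's proof it enters inside the collapsing step. Your explicit check that non-valid transformations cannot occur as prefixes of $\calT(x,y,\rho)$ -- needed so that \lemref{probinduces} applies term by term -- is a point the paper glosses over, and is a welcome addition. Both arguments otherwise lean on the same key lemmas, and both (like the paper) implicitly ignore the $O(1/n^2)$-probability event that a transcript hits the length cap, which slightly strains the ``if'' direction of \lemref{hasheseqtransformation}; that is an issue inherited from the paper, not introduced by you.
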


\begin{proof}
  Let $\calT$ be the set of all transformations that solve $x$ and $y$.  By 
  \lemref{hasheseqtransformation} and 
  \lemref{probinduces}, 
  \[
    \Pr_{h\in \calH} (h(x) = h(y)) = \sum_{T\in \calT} p^{|T|}.
  \]
Thus, we want to find the $\calT$ (for the given $x$ and $y$) that maximizes this probability.

  
  Since all pairs $T_1, T_2\in \calT$ solve $x$ and $y$, there is no pair $T_1, T_2\in \calT$ such that $T_1$ is a prefix of $T_2$.  Thus, $\calT$ can be viewed as the leaves of a trie of branching factor at most $3$, where each leaf has depth at least $cr$.  
  
  
  We show that without loss of generality all leaves are at depth $cr$.  Consider a leaf $T_1$ at the maximum depth of the trie $i > cr$, and its siblings $T_2$ and $T_3$ if they exist.  Collapse this leaf and its siblings, replacing them instead with a leaf $T_p$ corresponding to their parent in the trie; call the resulting set $\calT'$.  Since we have added a transformation of length $i-1$ and removed at most three of length $i$, this  changes the total cost of $\calT$ by at least $p^{i-1} - 3p^{i}$; this is positive since $p \leq 1/3$.
Repeating this process results in a set $\calT_M$ with all nodes at depth $cr$, where $\calT_M$ gives larger collision probability than the original set $\calT$.  

  There are at most $3^{cr}$ transformations in $\calT_M$, each of length $cr$.  Thus
  \[
    \Pr_{h\in \calH} (h(x) = h(y)) \leq 3^{cr} p^{cr}.\qedhere
  \]
\end{proof}

The following special case is not used in our similarity search bounds, but may be useful in understanding performance on some datasets.

\begin{lemma}
  \lemlabel{nomatchcollisionprob}
  Let $x$ and $y$ be two $\$$-terminal strings with $\ED(x,y) \geq cr$ such that for all $i < |x|-1$ and $j < |y|-1$, $x_i\neq y_j$.  Then $ \Pr_\rho(h_\rho(x) = h_\rho(y)) \leq \left(2p/(1-p)\right)^{cr}$.

\end{lemma}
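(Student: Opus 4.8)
The plan is to mirror the proof of \lemref{looseupper}, but to exploit the no-match hypothesis to replace its branching factor of $3$ by a branching factor of $2$. Exactly as in that proof (via \lemref{hasheseqtransformation} and \lemref{probinduces}), we have $\Pr_\rho(h_\rho(x)=h_\rho(y))\le\sum_{T\text{ solves }x,y}p^{|T|}$, so it suffices to bound this sum.

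The first step is to describe the solving transformations combinatorially. Under the no-match hypothesis the only pair of equal characters is the terminal $\$$'s, so (using the ``current string at node $(i,j)$ equals $y[j]\concat x_i\cdots x_{|x|-1}$'' reading that underlies \lemref{hasheseqtransformation}) the greedily transformed string equals $y$ exactly when the walk associated to the transformation sits at the corner $(|x|-1,|y|-1)$ of $G(x,y)$, and at no earlier node. Hence the solving transformations are precisely the monotone lattice paths in $G(x,y)$ from $(0,0)$ to $(|x|-1,|y|-1)$ with step set $\{\ins=(0,+1),\ \del=(+1,0),\ \rep=(+1,+1)\}$, and $|T|$ is the number of steps of the corresponding path. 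I would also note that the hypothesis forces $\ED(x,y)=\max(|x|,|y|)-1$: after deleting the $\$$'s no character of one string matches any character of the other, and every optimal alignment then consists only of $\ins$/$\del$/$\rep$ moves. Write $a=|x|-1$ and $b=|y|-1$, and assume without loss of generality that $a\ge b$ (otherwise transpose $G(x,y)$); so $a=\ED(x,y)\ge cr$.

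Next I would count these paths by their natural parameters. Such a path has exactly $a$ steps that increase the first coordinate; if $w$ of them are $\rep$ (where $0\le w\le b$), then the other $a-w$ are $\del$, the path additionally has $b-w$ steps of type $\ins$ placed into the $a+1$ gaps between consecutive first-coordinate steps, and the path has length $a+b-w$. Therefore
\[
  \sum_{T\text{ solves }x,y}p^{|T|}=\sum_{w=0}^{b}\binom{a}{w}\binom{a+b-w}{a}p^{a+b-w}=p^{a}\sum_{t=0}^{b}\binom{a}{b-t}\binom{a+t}{a}p^{t},
\]
where I substituted $t=b-w$. Comparing coefficients of $p^{t}$ against $2^{a}(1-p)^{-a}=2^{a}\sum_{t\ge0}\binom{a+t-1}{t}p^{t}$, it suffices to verify the elementary inequality $\binom{a}{b-t}\binom{a+t}{a}\le 2^{a}\binom{a+t-1}{t}$ for $0\le t\le b\le a$, which follows from the identity $\binom{a+t}{a}=\tfrac{a+t}{a}\binom{a+t-1}{t}$ together with $a+t\le 2a$ and $\binom{a}{b-t}\le 2^{a-1}$. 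Plugging back in gives $\sum_{T}p^{|T|}\le p^{a}\cdot 2^{a}(1-p)^{-a}=(2p/(1-p))^{a}$, and since $a=\ED(x,y)\ge cr$ and $2p/(1-p)\le 1$ (because $p\le 1/3$), this is at most $(2p/(1-p))^{cr}$, as required.

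I expect the structural step---pinning down that the solving transformations are \emph{exactly} the monotone corner-to-corner paths in $G(x,y)$---to be the delicate part: it needs a careful treatment of the boundary rows and columns of $G(x,y)$ and of how the appended $\$$ is handled, to confirm that every such path is automatically prefix-minimal (so reaching the corner only at the end is guaranteed) and that no path is double-counted or omitted. Given that, the enumeration and the closing binomial inequality are routine.
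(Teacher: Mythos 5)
Your proof is correct, but it takes a genuinely different route from the paper's. The paper argues by a recursion on grid nodes: it sets $W(i,j)$ to be the probability that the grid walk passes through node $(i,j)$, derives $W(i,j)\le p\,W(i-1,j)+p\,W(i-1,j-1)+p\,W(i,j-1)$ by the same $p_a,p_r$ calculus used in \lemref{probinduces}, and then proves by induction on $\ell=\max\{i,j\}$ that $W(i,j)\le\bigl(2p/(1-p)\bigr)^{\ell}$; since a surviving walk must reach $(|x|-1,|y|-1)$ and $\max\{|x|-1,|y|-1\}\ge cr$, the bound follows. You instead reduce to $\sum_T p^{|T|}$ over solving transformations exactly as in \lemref{looseupper}, use the no-match hypothesis to identify these transformations with the monotone \ins{}/\del{}/\rep{} lattice paths from $(0,0)$ to $(a,b)$, write the sum in closed form as $\sum_{w}\binom{a}{w}\binom{a+b-w}{a}p^{a+b-w}$, and dominate it term by term by the negative binomial expansion of $2^{a}(1-p)^{-a}$ via the inequality $\binom{a}{b-t}\binom{a+t}{a}\le 2^{a}\binom{a+t-1}{t}$. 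Both arguments are sound and give the same exponent $\max\{|x|,|y|\}-1\ge cr$. The paper's version is shorter and reuses the grid-walk machinery nearly verbatim; yours yields an exact closed-form expression for the collision sum and makes the source of the base $2p/(1-p)$ transparent (the $2^{a}$ from the central-binomial bound, the $(1-p)^{-a}$ from the geometric series over \ins{} placements), at the cost of the extra---routine, and correct---binomial manipulation and the structural verification, which you do carry out, that under the no-match hypothesis every monotone corner-to-corner path is automatically prefix-minimal and hence a solving transformation.
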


\begin{proof}
  Let $\hat{x}$ and $\hat{y}$ be arbitrary $\$$-terminal strings of length $cr$ with no other characters in common.  We use grid walks on $G(\hat{x},\hat{y})$ to reason about grid walks on $G(x,y)$.

  Let $GR(i,j)$ be the set of all grid walks reaching node $(i,j)$ in $G(\hat{x},\hat{y})$.  Let $W(i,j) = \Pr[g(\hat{x},\hat{y},\rho)\in GR(i,j)]$.  We have $W(0,0) = 1$.  

  Clearly, $GR(i,j)$ is a subset of $GR(i-1,j) \cup GR(i-1,j-1) \cup GR(i,j-1)$.  In fact, using a case-by-case analysis essentially identical to that of \lemref{probinduces}, 
  \[
    W(i,j) \leq p \cdot W(i-1,j) + p\cdot W(i-1,j-1) + p\cdot W(i,j-1).
  \]
  We take $W(i^*,-1) = 0 = W(-1,j^*)$ for all $i^*$ and $j^*$ so that we can state this recursion without border cases.

  We show by induction that if $\max{i,j} = \ell$, then $W(i,j) \leq (2p/(1-p))^{\ell}$.  This is already satisfied for $\ell = 0$. 

  Assume that the induction is satisfied for all $W(i^*,j^*)$ with $\max\{i^*,j^*\} = \ell - 1$.  For all $(i,j)$ such that $\max\{i,j\} = \ell$, at most two of $(i-1,j-1)$, $(i-1,j)$, and $(i,j-1)$ have max $\ell - 1$; the remaining pair has max $\ell$.  Thus 
  \[
    W(i,j) \leq p\left(\frac{2p}{1-p}\right)^{\ell - 1} + p\left(\frac{2p}{1-p}\right)^{\ell - 1} + p\left(\frac{2p}{1-p}\right)^\ell \leq \left(\frac{2p}{1-p}\right)^{\ell}
  \]

  All grid walks in $G(x,y)$ that go through $(|x|-1,|y|-1)$ must be in $GR(|x|-1,|y|-1)$.  Since we must have $\max\{|x|,|y|\} = cr + 1$, the proof is complete.
\end{proof}

\subsection{Final Running Time for Approximate Similarity Search}
\seclabel{settingnn}

In this section, we describe how to get from our LSH to an algorithm satisfying \defref{ASS}, proving \thmref{similarity}.  

\paragraph{Space and Preprocessing.} 
To preprocess, we first pick $R=\Theta(1/p_1)$ underlying hash functions $\rho_1, \rho_2,\ldots, \rho_{R}$.  For each string $x$ stored in the database, we calculate $h_{\rho_1}(x),\ldots, h_{\rho_{R}}(x)$, and store them in a dictionary data structure for fast lookups (for example, these can be stored in a hash table, where each $h_{\rho}(x)$ has a back pointer to $x$).

We can further decrease the space by storing a random $\log n$-bit hash of $h_{\rho}(x)$ for all $\rho$, rather than the full hash string of length $\Theta(d)$.  
We set $1/p_1 = 3^r n^{1/c}$ (see the discussion below), leading to space $\tilde{O}(3^r n^{1+1/c} + dn)$.  


We store the underlying functions $\rho_1,\rho_2,\ldots \rho_{R}$ so they can be used during queries.  We discuss how this can be achieved without affecting the space bounds in \secref{practical}.

\paragraph{Queries.} 
For a given query $q$, we calculate $h_{1}(q),h_2(q),\ldots,h_{R}(q)$.  For each database string $x$ that collides with $q$ (i.e. for each $x$ such that there exists an $i$ with $h_{\rho_i}(q) = h_{\rho_i}(x)$), we calculate $\ED(x,q)$.  We return $x$ if the distance is at most $cr$.  After repeating this for all $R$ underlying functions, we return that there is no close point.

Correctness of the data structure follows from the definition of $p_1$: if $\ED(q,x) \leq r$, then after $\Theta(1/p_1)$ independent hash functions, $q$ and $x$ collide on at least one hash function with constant probability.


The cost of each repetition is the cost to hash, plus the number of database elements at distance $> cr$ that collide with $q$.  The cost to hash is $O(d/(1-p_a) + \log n)$ by definition, and the cost to test if two strings have distance at most $cr$ is $O(dcr)$ by~\cite{Ukkonen85}.
The number of elements with distance $> cr$ that collide with $q$ is at most $np_2$ in expectation.  Thus our total expected cost can be written
\[
 O\left(\frac{1}{p_1}\left(\frac{d}{1-p_a} + \log n +  (dcr) np_2\right)\right).
\]
This can be minimized (up to a factor $O(\log n)$) by setting $p_2 = 1/ncr$ (recall that $p_a \leq 1/2$).

Thus, we set $p_2 = 1/ncr$, which occurs at $p =  1/(3(ncr)^{1/cr})$. Using this value of $p$, we get $p_1 \geq p^{r} = \Omega(1/(r3^rn^{1/c}))$.  


Putting this all together, the expected query time is $\tilde{O}(d 3^r n^{1/c})$.

\subsection{Approximate Nearest Neighbor}
\seclabel{nearestneighbor}

In this section we generalize \secref{settingnn} to prove \thmref{nearestneighbor}.  
Let $R = \{i\in \{1,\ldots,d\} ~|~ 3^i n^{1/c} \leq n\}$.  We build $O(\log n)$ copies of the data structure described in \secref{settingnn} for each $r^*\in R$.

\paragraph{Queries.} We iterate through each $r^*\in R$ in increasing order, querying the data structure as described above.  If we find a string at distance at most $cr^*$ we stop and return it.  If we reach an $r^*$ such that $3^{r^*}n^{1/c} > n$, we simply scan through all strings to check which is the closest.

Assume the actual nearest neighbor is at distance $r$.  
By Chernoff bounds, we succeed with high probability when $r^*= r$; that is, we return a string at distance at most $cr$.  Thus, the cost is at most $\sum_{r^* = 1}^{r} \tilde{O}(d3^{r^*}n^{1/c}) = \tilde{O}(d3^rn^{1/c})$ with high probability. 

\paragraph{Space.}  We build $O(\log n)$ copies of each data structure; thus the total space is $\sum_{r = 1}^{r^*} \tilde{O}(3^rn^{1 + 1/c} + dn) = \tilde{O}(n^2 + dn)$ by definition of $r^*$. 
We obtain preprocessing time $\tilde{O}(dn^2)$ immediately. 

\subsection{Storing Underlying Functions}
\seclabel{practical}

Our algorithm uses a large number of fully-random, real-number hashes; this causes issues with the space bounds since we need to store each hash.
In this section we relax this assumption.   

We modify $\rho$ to hash to a uniformly random element of the set $\{0, \epsilon, 2\epsilon, \ldots, 1\}$.  
Since the domain of each $\rho$ has size $O(|\Sigma|(d + \log n)$, this means that each $\rho$ can be stored in $O(|\Sigma|\log(1/\epsilon)(d + \log n))$ bits of space.

Intuitively, setting $\epsilon = 1/n$ should not affect our query bounds, while still retaining the space bounds of \thmreftwo{similarity}{nearestneighbor}.  We prove this formally in \lemref{epsilonroundedrho}.

%
%
%


\begin{lemma}
  \lemlabel{epsilonroundedrho}
  With $p_a$ and $p_r$ increased by $\epsilon = 1/n$, and assuming $d = O(n)$, if $x$ and $y$ satisfy $\ED(x,y)\leq r$, $\Pr(h(x) = h(y) \geq \Omega(p^r - 2/n^2)$.  If $x'$ and $y'$ satisfy $\ED(x',y') \geq cr$ then $\Pr((h(x) = h(y)) \leq O((3p)^{cr})$.
\end{lemma}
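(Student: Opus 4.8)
The plan is to rerun the collision-probability analysis of \secref{collisionprobs} with the ideal parameters $p_a,p_r$ replaced by the slightly larger \emph{effective} parameters that the rounded underlying function actually realizes, and then to show that this change only costs a constant factor in $p_1$ and $p_2$. First, when $\rho$ returns a uniform element of $\{0,\epsilon,\ldots,1\}$ and the hash uses the thresholds $p_a+\epsilon$ and $p_r+\epsilon$, the probability that the first coordinate is at most its threshold is some $\hat p_a\in[p_a,p_a+2\epsilon]$, and, conditioned on not taking a hash-insert, the probability of a hash-replace is some $\hat p_r\in[p_r,p_r+2\epsilon]$. Thus the rounded hash is distributed exactly like the ideal hash run with hash-insert probability $\hat p_a$ and conditional hash-replace probability $\hat p_r$, so it suffices to re-establish \lemref{gridwalkprobs}, \lemref{probinduces}, \lemref{closepoints}, and \lemref{looseupper} with $(\hat p_a,\hat p_r)$ in place of $(p_a,p_r)$.

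Next I would re-derive the grid-walk step probabilities: repeating the proof of \lemref{gridwalkprobs} verbatim gives $\hat p_a^2$ for \loo, $\hat p_a(1-\hat p_a)\hat p_r$ for \del{} and \ins, and $(1-\hat p_a)^2\hat p_r^2$ for \rep. Using $\hat p_a\le p_a+2\epsilon$, $\hat p_r\le p_r+2\epsilon$ and the identity $(1-p_a)p_r=p_a$, each of these lies in $[\,p_a^2(1-O(\epsilon/p_a)),\,p_a^2(1+O(\epsilon/p_a))\,]$, and likewise $1/(1-\hat p_a^2)=(1\pm O(\epsilon p_a))/(1-p_a^2)$. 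Feeding these approximate values into the induction of \lemref{probinduces} — whose structure is unchanged, only its telescoping factor $p=p_a^2/(1-p_a^2)$ becoming an approximate one — yields, for any length-$t$ transformation $T$ valid for $x$ and $y$,
\[
  \hat p^{\,t}-O(1/n^2)\ \le\ \Pr_\rho[\,T\text{ is a prefix of }\calT(x,y,\rho)\,]\ \le\ \hat p^{\,t}
\]
for some $\hat p\in[\,p(1-O(\epsilon/p_a)),\,p(1+O(\epsilon/p_a))\,]$. (\lemref{shorthash} still holds with the larger hash-insert probability $\hat p_a$, since its Chernoff bound is stated with slack.)

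Finally I would conclude. With the parameter choices of \secref{settingnn} and \secref{nearestneighbor} the data structure is only used when $3^rn^{1/c}\le n$, so $r=O(\log n)$ and hence the transformations in question have length $t\le cr=O(\log n)$; moreover $p_2=1/(ncr)$ forces $p=\Omega(1/(n\log n))$ there, hence $p_a=\Omega(1/\sqrt{n\log n})$ and $\epsilon/p_a=o(1/\log n)$. Therefore $\hat p^{\,t}=p^{\,t}\,(1\pm o(1/\log n))^{O(\log n)}=\Theta(p^{\,t})$. Plugging $t=r$ into the proof of \lemref{closepoints} gives $\Pr_\rho(h(x)=h(y))\ge\hat p^{\,r}-O(1/n^2)=\Omega(p^{\,r}-2/n^2)$, and plugging $t=cr$ into the proof of \lemref{looseupper} — whose trie-collapsing step only requires $\hat p\le 1/3$, which holds since $p$ is bounded away from $1/3$ in this regime — gives $\Pr_\rho(h(x')=h(y'))\le 3^{cr}\hat p^{\,cr}=O((3p)^{cr})$.

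The main obstacle is that \lemref{gridwalkprobs} and \lemref{probinduces} hinge on the exact identities $(1-p_a)p_r=p_a$ and $p=p_a^2/(1-p_a^2)$, which collapse the five grid-walk probabilities so that the induction telescopes to exactly $p^t$; after rounding these hold only approximately, so the real work is re-running that induction while tracking a multiplicative error and checking that the accumulated factor $(1\pm O(\epsilon/p_a))^{t}$ stays within a constant — which reduces to verifying $\epsilon/p_a=o(1/\log n)$, i.e. that $p_a$ is never as small as $1/(\sqrt n\,\polylog n)$. Minor additional care is needed to ensure $\hat p_r$ does not round above $1$ and that $\hat p\le 1/3$, both of which follow from $p$ being bounded away from $1/3$ in the regime where the lemma is applied.
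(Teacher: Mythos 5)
Your proposal is correct and follows the same high-level strategy as the paper: replace $(p_a,p_r)$ with the effective $(\hat p_a,\hat p_r)$ induced by the rounded underlying function, re-run \lemref{gridwalkprobs} and the telescoping induction of \lemref{probinduces} with the exact identities $p=(1-p_a)p_ap_r/(1-p_a^2)=(1-p_a)^2p_r^2/(1-p_a^2)$ holding only approximately, and control the accumulated multiplicative error. Where you diverge from the paper is in the error accounting, and your accounting is actually the tighter reading of the situation. Since $\hat p_a = p_a+\epsilon = p_a(1+\epsilon/p_a)$, the per-step multiplicative perturbation of the telescoping factor $p = p_a p_r/(1+p_a)$ is $\Theta(\epsilon/p_a)$, and since $p_a$ can be as small as $\Theta(1/\sqrt n)$ (when $cr$ is small), this can be on the order of $1/\sqrt n$. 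The paper instead asserts per-step error $\epsilon'=O(1/n)$ and then uses $t\le 2d=O(n)$; but its derivation relies on the intermediate bound $\hat p_a < p_a(1+1/n)$, which would require $p_a>1$, so it cannot hold as written, and the claimed $p(1+O(1/n^2))$ upper bounds on the perturbed step probabilities are not achievable in the regime where $p_a\ll 1$. Your route accepts the genuinely larger per-step error $O(\epsilon/p_a)$ and compensates by observing that the data structure is only used when $3^rn^{1/c}\le n$, so $t\le cr=O(\log n)$, together with $\epsilon/p_a=o(1/\log n)$. This gives a sound proof. You also correctly flag the two auxiliary checks the paper leaves implicit: that $\hat p\le 1/3$ so the trie-collapsing argument in \lemref{looseupper} still applies, and that \lemref{shorthash} survives the increase in $p_a$. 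One small imprecision worth noting: after the induction the collision probability is not exactly $\hat p^{\,t}$ for a single $\hat p$ (the per-step factors are not identical once rounded), but the conclusion is unchanged since each factor lies in $[\,p(1-O(\epsilon/p_a)),p(1+O(\epsilon/p_a))\,]$.
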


\begin{proof}
  For simplicity, we let $\hat{p_a} = p_a + \epsilon$ and $\hat{p_r} = p_r + \epsilon$.  

  Since $p_1 = \Omega(1/(r3^r n^{1/c})$, we have (omitting constants for simplicity) $p = 1/(3n^{1/rc})$.  Therefore, $p_a = \sqrt{1/(1 + 3n^{1/rc})} \gg 1/n$, and thus $p_r = p_a/(1- p_a) \gg 1/n$.  Thus, $p_a < \hat{p_a} < p_a(1 + 1/n)$ and $p_r < \hat{p_r} < p_r(1 + 1/n)$.

  Let $\epsilon'$ satisfy
  \begin{equation}
    \eqlabel{perturbedps1}
    p(1 - \epsilon') \leq \frac{\hat{p_a}(1-\hat{p_a})\hat{p_r}}{(1 - \hat{p_a}^2)} \leq p(1 + \epsilon')
  \end{equation}
  and
  \begin{equation}
    \eqlabel{perturbedps2}
    p(1 - \epsilon') \leq \frac{(1-\hat{p_a})^2\hat{p_r}^2}{(1 - \hat{p_a}^2)} \leq p(1 + \epsilon').
  \end{equation}
  Then the proof of \lemref{probinduces} gives that for any $\$$-terminal strings $x$ and $y$, and any transformation $T$ of length $t$, 
  \[
    (p(1-\epsilon'))^t - 1/n^2 \leq \Pr_{\rho}[T \text{ is a prefix of }\calT(x,y,\rho)] \leq (p(1 + \epsilon'))^t.
  \]
  So long as $(1 \pm \epsilon')^t = \Theta(1)$ we are done.  Clearly this is the case for $\epsilon' = O(1/n)$ since $t \leq 2d = O(n)$.
  We prove each bound in Equations~\eqref{perturbedps1} and~\eqref{perturbedps2} one term at a time for $\epsilon' = O(1/n)$.  

  \noindent
  First inequality (recall that $p_a \leq 1/2$): 
  \begin{align*}
    \frac{\hat{p_a}(1 - \hat{p_a})\hat{p_r}}{(1 - \hat{p_a}^2)} &> \frac{p_a(1 - p_a(1 + 1/n))p_r}{1 - p_a^2}\\
    &= p - \frac{p_ap_r}{n(1 - p_a^2)} \\
    &= p - \frac{p}{n(1 - p_a)} \\
    &= p(1 - O(n))
    \end{align*}
  Second inequality: 
  \begin{align*}
    \frac{\hat{p_a}(1 - \hat{p_a})\hat{p_r}}{(1 - \hat{p_a}^2)} &< \frac{p_a(1 + 1/n)^2(1 - p_a)p_r}{1 - (p_a(1 + 1/n))^2} = \frac{p_a(1 - p_a)p_r}{1/(1 + 1/n)^2 - p_a^2} \\
    &= \frac{p_a(1 - p_a)p_r}{1 - O(1/n^2) - p_a^2} < \frac{p_a(1 - p_a)p_r}{(1 - p_a^2) (1 - O(1/n^2))} \\
    &= p(1 + O(1/n^2))
  \end{align*}
  Third inequality (since $p_a \leq 1/2$, $2p_a \leq 4(1 - p_a)^2$): 
  \begin{align*}
    \frac{(1-\hat{p_a})^2\hat{p_r}^2}{(1 - \hat{p_a}^2)} &\geq 
    \frac{(1-p_a(1 + 1/n))^2p_r^2}{(1 - {p_a}^2)} \\
    &= \frac{(1 - 2p_a(1 + 1/n) + p_a^2(1 + 1/n)^2)p_r^2}{(1 - {p_a}^2)} \\
    &> \frac{((1-p_a)^2 - 2p_a/n)p_r^2}{(1 - {p_a}^2)} = p - \frac{2p_ap_r^2}{n(1 - {p_a}^2)} \\
    &\geq p(1 - O(1/n)) 
  \end{align*}
  Fourth inequality (largely the same as the second inequality): 
  \begin{align*}
    \frac{(1-\hat{p_a})^2\hat{p_r}^2}{(1 - \hat{p_a}^2)} &\leq 
    \frac{(1-{p_a})^2p_r(1 + 1/n)^2}{1 - ({p_a}(1 + 1/n))^2} = p(1 + O(1/n^2))
  \end{align*}
\end{proof}




%
%


\bibliographystyle{plain}
\bibliography{main}

\begin{thebibliography}{10}

\bibitem{AhleAuPa17}
Thomas~Dybdahl Ahle, Martin Aum{\"u}ller, and Rasmus Pagh.
\newblock Parameter-free locality sensitive hashing for spherical range
  reporting.
\newblock In {\em Proceedings of the Twenty-Eighth Annual ACM-SIAM Symposium on
  Discrete Algorithms}, pages 239--256. SIAM, 2017.

\bibitem{AlmanWilliams15}
Josh Alman and Ryan Williams.
\newblock Probabilistic polynomials and hamming nearest neighbors.
\newblock In {\em 2015 IEEE 56th Annual Symposium on Foundations of Computer
  Science}, pages 136--150. IEEE, 2015.

\bibitem{AndoniInRa18}
Alexandr Andoni, Piotr Indyk, and Ilya Razenshteyn.
\newblock Approximate nearest neighbor search in high dimensions.
\newblock In {\em Proc. International Congress of Mathematicians (ICM)}, pages
  3271--3302, 2018.

\bibitem{AndoniLaRa17}
Alexandr Andoni, Thijs Laarhoven, Ilya Razenshteyn, and Erik Waingarten.
\newblock Optimal hashing-based time-space trade-offs for approximate near
  neighbors.
\newblock In {\em Proc 28th Symposium on Discrete Algorithms (SODA)}, pages
  47--66. ACM-SIAM, 2017.

\bibitem{AumullerBeFa19}
Martin Aum{\"u}ller, Erik Bernhardsson, and Alexander Faithfull.
\newblock Ann-benchmarks: A benchmarking tool for approximate nearest neighbor
  algorithms.
\newblock {\em Information Systems}, 2019.

\bibitem{Boytsov11}
Leonid Boytsov.
\newblock Indexing methods for approximate dictionary searching: Comparative
  analysis.
\newblock {\em Journal of Experimental Algorithmics (JEA)}, 16:1--1, 2011.

\bibitem{BrillMoore00}
Eric Brill and Robert~C Moore.
\newblock An improved error model for noisy channel spelling correction.
\newblock In {\em Proc. 38th Meeting on Association for Computational
  Linguistics}, pages 286--293. Association for Computational Linguistics,
  2000.

\bibitem{ChakrabortyGoKo16}
Diptarka Chakraborty, Elazar Goldenberg, and Michal Kouck{\`y}.
\newblock Streaming algorithms for embedding and computing edit distance in the
  low distance regime.
\newblock In {\em Proc. 48th Annual Symposium on Theory of Computing}, pages
  712--725. ACM, 2016.

\bibitem{ChanLaSu06}
Ho-Leung Chan, Tak-Wah Lam, Wing-Kin Sung, Siu-Lung Tam, and Swee-Seong Wong.
\newblock A linear size index for approximate pattern matching.
\newblock In {\em Proc. 17th Symposium on Combinatorial Pattern Matching
  (CPM)}, pages 49--59. Springer, 2006.

\bibitem{Charikar02}
Moses~S Charikar.
\newblock Similarity estimation techniques from rounding algorithms.
\newblock In {\em Proceedings of the thiry-fourth annual ACM symposium on
  Theory of computing}, pages 380--388. ACM, 2002.

\bibitem{ChierichettiKuMa14}
Flavio Chierichetti, Ravi Kumar, and Mohammad Mahdian.
\newblock The complexity of lsh feasibility.
\newblock {\em Theoretical Computer Science}, 530:89--101, 2014.

\bibitem{Christiani17}
Tobias Christiani.
\newblock A framework for similarity search with space-time tradeoffs using
  locality-sensitive filtering.
\newblock In {\em Proceedings of the Twenty-Eighth Annual ACM-SIAM Symposium on
  Discrete Algorithms}, pages 31--46. Society for Industrial and Applied
  Mathematics, 2017.

\bibitem{ChristinianiPagh17}
Tobias Christiani and Rasmus Pagh.
\newblock Set similarity search beyond minhash.
\newblock In {\em Proc. 49th Symposium on Theory of Computing (STOC)}, pages
  1094--1107. ACM, 2017.

\bibitem{CohenAddadFeSt19}
Vincent Cohen-Addad, Laurent Feuilloley, and Tatiana Starikovskaya.
\newblock Lower bounds for text indexing with mismatches and differences.
\newblock In {\em Proc. 30th Symposium on Discrete Algorithms (SODA)}, pages
  1146--1164. ACM-SIAM, 2019.

\bibitem{ColeGoLe04}
Richard Cole, Lee-Ad Gottlieb, and Moshe Lewenstein.
\newblock Dictionary matching and indexing with errors and don't cares.
\newblock In {\em Proc. 36th ACM Symposium on Theory of Computing (STOC)},
  pages 91--100. ACM, 2004.

\bibitem{ColemanShBa19}
Benjamin Coleman, Anshumali Shrivastava, and Richard~G Baraniuk.
\newblock Race: Sub-linear memory sketches for approximate near-neighbor search
  on streaming data.
\newblock {\em arXiv preprint arXiv:1902.06687}, 2019.

\bibitem{DongMoLi11}
Wei Dong, Charikar Moses, and Kai Li.
\newblock Efficient k-nearest neighbor graph construction for generic
  similarity measures.
\newblock In {\em Proc. 20th Conference on the World Wide Web (WWW)}, pages
  577--586. ACM, 2011.

\bibitem{HarPeledInMo12}
Sariel Har-Peled, Piotr Indyk, and Rajeev Motwani.
\newblock Approximate nearest neighbor: Towards removing the curse of
  dimensionality.
\newblock {\em Theory of computing}, 8(1):321--350, 2012.

\bibitem{Indyk04}
Piotr Indyk.
\newblock Approximate nearest neighbor under edit distance via product metrics.
\newblock In {\em Proc. 15th Symposium on Discrete Algorithms (SODA)}, pages
  646--650. ACM-SIAM, 2004.

\bibitem{IndykMotwani98}
Piotr Indyk and Rajeev Motwani.
\newblock Approximate nearest neighbors: towards removing the curse of
  dimensionality.
\newblock In {\em Proc. 30th Symposium on Theory of Computing (STOC)}, pages
  604--613. ACM, 1998.

\bibitem{JohnsonDoJe17}
Jeff Johnson, Matthijs Douze, and Herv{\'e} J{\'e}gou.
\newblock Billion-scale similarity search with gpus.
\newblock {\em arXiv preprint arXiv:1702.08734}, 2017.

\bibitem{Jowhari12}
Hossein Jowhari.
\newblock Efficient communication protocols for deciding edit distance.
\newblock In {\em Proc. 20th European Symposium on Algorithms (ESA)}, pages
  648--658. Springer, 2012.

\bibitem{KahveciLjSi04}
Tamer Kahveci, Vebjorn Ljosa, and Ambuj~K Singh.
\newblock Speeding up whole-genome alignment by indexing frequency vectors.
\newblock {\em Bioinformatics}, 20(13):2122--2134, 2004.

\bibitem{KhotNaor06}
Subhash Khot and Assaf Naor.
\newblock Nonembeddability theorems via fourier analysis.
\newblock {\em Mathematische Annalen}, 334(4):821--852, 2006.

\bibitem{LamSuTa08}
Tak~Wah Lam, Wing-Kin Sung, Siu-Lung Tam, Chi-Kwong Wong, and Siu-Ming Yiu.
\newblock Compressed indexing and local alignment of dna.
\newblock {\em Bioinformatics}, 24(6):791--797, 2008.

\bibitem{LiZhYi16}
Wen Li, Ying Zhang, Yifang Sun, Wei Wang, Wenjie Zhang, and Xuemin Lin.
\newblock Approximate nearest neighbor search on high dimensional
  data---experiments, analyses, and improvement.
\newblock In {\em Transactions on Knowledge and Data Engineering (TKDE)}. IEEE,
  2019.
\newblock to appear.

\bibitem{MaassMoNo05}
Moritz~G Maa{\ss} and Johannes Nowak.
\newblock Text indexing with errors.
\newblock In {\em Proc. 16th Symposium on Combinatorial Pattern Matching
  (CPM)}, pages 21--32. Springer, 2005.

\bibitem{MalkovYashunin18}
Yury~A Malkov and Dmitry~A Yashunin.
\newblock Efficient and robust approximate nearest neighbor search using
  hierarchical navigable small world graphs.
\newblock {\em IEEE transactions on pattern analysis and machine intelligence},
  2018.

\bibitem{ManberWu94}
Udi Manber and Sun Wu.
\newblock An algorithm for approximate membership checking with application to
  password security.
\newblock {\em Information Processing Letters}, 50(4):191--197, 1994.

\bibitem{MarcaisDePa19}
Guillaume Mar{\c{c}}ais, Dan DeBlasio, Prashant Pandey, and Carl Kingsford.
\newblock Locality sensitive hashing for the edit distance.
\newblock In {\em Proc. 27th Conference on Intelligent Systems for Molecular
  Biology (ISMB)}, 2019.

\bibitem{MitzenmacherUpfal17}
Michael Mitzenmacher and Eli Upfal.
\newblock {\em Probability and computing: Randomization and probabilistic
  techniques in algorithms and data analysis}.
\newblock Cambridge university press, 2017.

\bibitem{Navarro01}
Gonzalo Navarro.
\newblock A guided tour to approximate string matching.
\newblock {\em ACM computing surveys (CSUR)}, 33(1):31--88, 2001.

\bibitem{OstrovskyRabani07}
Rafail Ostrovsky and Yuval Rabani.
\newblock Low distortion embeddings for edit distance.
\newblock {\em Journal of the ACM (JACM)}, 54(5):23, 2007.

\bibitem{OzturkFerhatosmanoglu03}
Ozgur Ozturk and Hakan Ferhatosmanoglu.
\newblock Effective indexing and filtering for similarity search in large
  biosequence databases.
\newblock In {\em Proc. 3rd Symposium on Bioinformatics and Bioengineering},
  pages 359--366. IEEE, 2003.

\bibitem{PaghPhSi17}
Rasmus Pagh, Ninh Pham, Francesco Silvestri, and Morten St{\"o}ckel.
\newblock I/o-efficient similarity join.
\newblock {\em Algorithmica}, 78(4):1263--1283, 2017.

\bibitem{Rubinstein18}
Aviad Rubinstein.
\newblock Hardness of approximate nearest neighbor search.
\newblock In {\em Proc. 50th Symposium on Theory of Computing (STOC)}, pages
  1260--1268. ACM, 2018.

\bibitem{Ukkonen85}
Esko Ukkonen.
\newblock Algorithms for approximate string matching.
\newblock {\em Information and control}, 64(1-3):100--118, 1985.

\bibitem{WangShWa18}
Yiqiu Wang, Anshumali Shrivastava, Jonathan Wang, and Junghee Ryu.
\newblock Randomized algorithms accelerated over cpu-gpu for ultra-high
  dimensional similarity search.
\newblock In {\em Proc. International Conference on Management of Data
  (SIGMOD)}, pages 889--903. ACM, 2018.

\bibitem{WilburKiXi06}
W~John Wilbur, Won Kim, and Natalie Xie.
\newblock Spelling correction in the pubmed search engine.
\newblock {\em Information retrieval}, 9(5):543--564, 2006.

\bibitem{ZhangZhang17}
Haoyu Zhang and Qin Zhang.
\newblock Embedjoin: Efficient edit similarity joins via embeddings.
\newblock In {\em Proc. 23rd International Conference on Knowledge Discovery
  and Data Mining (KDD)}, pages 585--594. ACM, 2017.

\bibitem{ZhangZhang19}
Haoyu Zhang and Qin Zhang.
\newblock Minjoin: Efficient edit similarity joins via local hash minima.
\newblock In {\em Proc. 25th International Conference on Knowledge Discovery
  and Data Mining (KDD)}, pages 1093--1103. ACM, 2019.

\end{thebibliography}

\appendix
\section{Pseudocode and Example Hash}
\seclabel{example}

Below we give an example of how three strings $x$, $y$, and $z$ are hashed using an underlying function $\rho_1$.  We use $\Sigma = \{a,b,c\}$ and $p = 1/8$, so $p_a = 1/3$ and $p_r = 1/2$.  For simplicity, we round the values of $\rho_1$ to the first decimal place, and truncate the domain of $\rho_1$ to only show values of $|s|$ up to $5$.


%
%
{
  \centering
\vspace{.4in}
$x = abc$\\
$h_{\rho_1}(x) = \bot a \bot \bot \bot \bot$\\
$y = bac$\\
$h_{\rho_1}(y) = \bot a \bot \bot \bot \bot$\\
$z = cba$\\
$h_{\rho_1}(z) = c \bot  \bot a \$ $\\
}

  \vspace{.4in}
  {
    \centering
\begin{tikzpicture}
    \foreach \i in {0,...,3}
       \foreach \j in {0,...,3}
          \node[circle,thick,draw] (\i\j) at (\i,3-\j) {};

    \node[label=270:\die,circle,thick,draw] (stop) at (-.8,-1.2) {};

    \node (x1) at (0,3.7) {$a$};
    \node (x2) at (1,3.75) {$b$};
    \node (x3) at (2,3.7) {$c$};
    \node (x4) at (3,3.7) {$\$$};

    \node (y1) at (-.5,3) {$b$};
    \node (y2) at (-.5,2) {$a$};
    \node (y3) at (-.5,1) {$c$};
    \node (y4) at (-.5,0) {$\$$};

    \foreach \i in {0,...,2}
       \foreach \j in {0,...,2}
          \path[->,draw,thick,auto] (\i\j) edge node {} (\the\numexpr\i+1\relax\the\numexpr\j+1\relax);

    \path[blue,->,draw,line width=2] (01) edge node {} (12);
    \path[blue,->,draw,line width=2] (22) edge node {} (33);

    \foreach \i in {0,...,3}
       \foreach \j in {0,...,3}
          \path[->,draw,thick,distance=4mm,in=30,out=70] (\i\j) edge node {} (\i\j);

    \path[->,draw,thick,distance=4mm,in=30,out=70] (stop) edge node {} (stop);

    \path[blue,->,draw,line width=2,distance=4mm,in=30,out=70] (22) edge node {} (22);

       \path[blue,->,draw,line width=2] (00) edge node {} (01);
       \path[->,draw,thick] (11) edge node {} (12);
       \path[->,draw,thick] (21) edge node {} (22);
       \path[->,draw,thick] (02) edge node {} (03);
       \path[->,draw,thick] (12) edge node {} (13);
       \path[->,draw,thick] (30) edge node {} (31);
       \path[->,draw,thick] (31) edge node {} (32);
       \path[->,draw,thick] (32) edge node {} (33);

       \path[->,draw,thick] (00) edge node {} (10);
       \path[->,draw,thick] (11) edge node {} (21);
       \path[->,draw,thick] (21) edge node {} (31);
       \path[->,draw,thick] (02) edge node {} (12);
       \path[blue,->,draw,line width=2] (12) edge node {} (22);
       \path[->,draw,thick] (03) edge node {} (13);
       \path[->,draw,thick] (13) edge node {} (23);
       \path[->,draw,thick] (23) edge node {} (33);

       \path[->,draw,thick,opacity=.29] (00) edge [bend right] node {} (stop);
       \path[->,draw,thick,opacity=.29] (11) edge [bend right] node {} (stop);
       \path[->,draw,thick,opacity=.29] (21) edge [bend right] node {} (stop);
       \path[->,draw,thick,opacity=.29] (02) edge [bend right] node {} (stop);
       \path[->,draw,thick,opacity=.29] (22) edge [bend left] node {} (stop);
       \path[->,draw,thick,opacity=.29] (30) edge [bend left] node {} (stop);
       \path[->,draw,thick,opacity=.29] (31) edge [bend left] node {} (stop);
       \path[->,draw,thick,opacity=.29] (32) edge [bend left] node {} (stop);
       \path[->,draw,thick,opacity=.29] (03) edge [bend left] node {} (stop);
       \path[->,draw,thick,opacity=.29] (13) edge [bend left] node {} (stop);
       \path[->,draw,thick,opacity=.29] (23) edge [bend left] node {} (stop);

\end{tikzpicture}
    
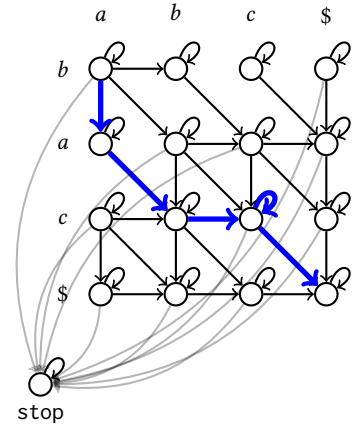
\captionof{figure}{This figure shows $G(x,y)$.  For clarity, all edge labels are ommited and \die{} edges are partially transparent. The edges traversed by $g(x,y,\rho_1)$ are bold and colored blue.}
  }
    \begin{center}
\begin{tabular}{|c|c|c|}
\hline
  ${x_i}$ & ${|s|}$ & ${\rho_1(x_i,|s|)}$ \\
\hline
\hline
  a & 0 & $(0.1, 0.7)$\\
\hline
  b & 0 & $(0.6, 0.3)$\\
\hline
  c & 0 & $(0.7, 0.6)$\\
\hline
  \$ & 0 &$(0.1, 0.4)$\\
\hline
  a & 1 & $(0.9, 0.6)$\\
\hline
  b & 1 & $(0.8, 0.3)$\\
\hline
  c & 1 & $(0.5, 0.9)$\\
\hline
  \$ & 1 &$ (0, 0.1)$\\
\hline
  a & 2 & $(0.1, 0.7)$\\
\hline
  b & 2 & $(0.8, 0.2)$\\
\hline
  c & 2 & $(0.1, 0.9)$\\
\hline
  \$ & 2 &$(0.1, 0.3)$\\
\hline
  a & 3 & $(0.6, 0.8)$\\
\hline
  b & 3 & $(0.9, 0.4)$\\
\hline
  c & 3 & $(0.2, 0.8)$\\
\hline
  \$ & 3 &$(0.8, 0.7)$\\
\hline
  a & 4 & $(0.2, 0.3)$\\
\hline
  b & 4 & $(0.1, 0.1)$\\
\hline
  c & 4 & $(0.7, 0.4)$\\
\hline
  \$ & 4 &$(0.9, 0.5)$\\
\hline
  a & 5 & $(0.5, 0.6)$\\
\hline
  b & 5 & $(0.1, 0.5)$\\
\hline
  c & 5 & $(0.4, 0.6)$\\
\hline
  \$ & 5 & $(0.6, 0)$\\
\hline
\end{tabular}
    \end{center}

  \begin{algorithm}[H]
    \caption{Calculating $h_{\rho}(x)$}
  \begin{algorithmic}[1]
    \State{$i \gets 0$}
    \State{Create an empty string $s$}
    \While{$i < |x|$ and $|s| < 8d/(1-p_a) + 6\log n$}
        \State{$(r_1,r_2)\gets \rho(x_i,|s|)$}
        \If{$r_1 \leq p_a$}
            \State{Append $\bot$ to $s$}
        \ElsIf{$r_2 \leq p_r$}
            \State{Append $\bot$ to $s$}
            \State{$i \gets i + 1$}
        \Else
            \State{Append $x_i$ to $s$}
            \State{$i \gets i + 1$}
        \EndIf
    \EndWhile
    \State{\Return{$s$}}
\end{algorithmic}
\end{algorithm}

\end{document}